
\documentclass[11pt]{amsart}
\usepackage{mypackage}
\usepackage{myformat}

\newcommand{\Act}{\mathfrak A}
\newcommand{\Lu}{\mathscr L_u}
\newcommand{\Ru}{\mathcal R_u}
\newcommand{\mix}[3]{#1_{_{#2}}#3}
\newcommand{\mmix}[3]{#1_{_{\{#2\}}}#3}
\newcommand{\gv}{\gamma_{_V}}

\newcommand{\IV}{I_{_V}}
\newcommand{\Zu}{\mathcal Z_u}
\newcommand{\Ku}{\mathcal K_u}

\begin{document}

\mybic
\date \today 
 
\title
[Psychological Gambles]
{Subjective Expected Utility and Psychological Gambles}


\begin{abstract} 
We obtain an elementary characterization of expected utility 
based on a representation of choice in terms of psychological 
gambles, which requires no assumption other than coherence 
between ex-ante and ex-post preferences. Weaker version of 
coherence are associated with various attitudes towards 
complexity  and lead to a characterization of minimax or 
Choquet expected utility.
\\
\noindent
\textsc{Keywords}:
Arbitrage,
Choquet expected utility,
Coherence,
Gamble,
Maxmin expected utility,
Multiple priors,
Subjective expected utility.
\\
\noindent
JEL: 
D81, G12

\end{abstract}

\maketitle

\section{Introduction}

An agent who is about to perform a given task, such as 
an exam, a competition or a choice, often speculates 
{\it ex-ante} on how the final result of his efforts will 
compare with his anticipations or with the outcome that 
he has decided to adopt as a reference. The mental 
process of entering into a tacit bet with oneself, to which 
we shall refer as a psychological gamble, often works as 
an attempt to mitigate the potential frustration that 
arises when falling short of a given standard. In the 
literature one finds illustrations of similar phenomena.
In an experimental study Sabater-Grande et al. 
\cite{sabater_et_al} document the importance of 
reference targets for university students facing exams. 
On the other hand, K\H{o}szegi and Rabin \cite{koszegi_rabin} 
develop a model of reference dependent choice in which 
the reference point reflects the agents expectations based
on their past experience in such standard choice situations 
as consumption choice or labour supply.

Although most examples of this phenomenon concerning
choice typically involve uncertainty on the consequences 
of our decisions, psychological gambles are actually involved 
in all choice problems. In fact even in such simple situations 
as planning a dinner out with friends, the exact details of
the context in which choice will take place, such as the set 
of alternatives actually available, their price, the interaction 
with other participants and so on, are not completely known 
in advance. In a recent work, Enke and Graeber 
\cite{enke_graeber} draw a sharp distinction between the
type of uncertainty that affects the consequences of choice
from what they refer to as {\it cognitive uncertainty} which
characterizes our ex-ante perception of choice and is often 
responsible of the noisiness of our decision process and of
deviations from the straightforward maximization of preferences.

Although this paper is not concerned with foundational 
issues, we take this short discussion as our motivation 
for reformulating choice problems in terms simple 
psychological gambles, which we introduce in Section 
\ref{sec gamble} as a subjectivist alternative to such 
objective randomizing mechanisms as horse lotteries of 
Anscombe and Aumann \cite{anscombe_aumann}.
Our starting point is that, notwithstanding the apparent 
conceptual differences between the behavioural process
of choice and the mind experiment of betting on our own 
anticipations, {\it any} choice problem may be equivalently 
written as a psychological gamble and assigning a utility 
value to some item $x$ is no different than assessing the 
probability that the anticipation of an outcome preferred 
to $x$ will be disappointed. The advantage of this reformulation
is that it embeds choice problems into the wider setting
of gambling strategies.

We apply psychological gambles to the problem of choice 
under uncertainty as originally described by Savage 
\cite{savage}, but taking the utility from acts and that 
from their consequences as two given functions. We 
provide a unified treatment of the most popular models 
in the theory of choice under uncertainty: subjective 
expected utility (SEU) of Savage \cite{savage}, multiple 
priors expected utility (MEU) of Gilboa and Schmeidler 
\cite{gilboa_schmeidler} and Choquet expected utility 
(CEU) of Schmeidler \cite{schmeidler}. These three 
models are characterized each one by a corresponding 
version of a single and simple property: coherence. This 
is an appropriate extension to gambles of the monotonicity 
axiom common to every model of decision theory under 
uncertainty and that, in plain terms suggests that gambles 
should be evaluated by the consequences they produce.

One merit of the setting presented hereafter is its relative
simplicity, although one may object that our approach is
more functional analytic than axiomatic. Another advantage
is the minimal assumptions we need concerning the structure 
of the set of acts. To our knowledge, this is the only paper 
not assuming that all simple acts are included in the choice 
set. Apart from mathematical generality, our approach
introduces new features such as preference bubbles,
see Theorem \ref{th GEU}, related to the possibility that 
some acts may induce consequences of unbounded utility.
Also, the classical strategies of proof, inspired either by
Savage \cite{savage} or by Anscombe and Aumann
\cite{anscombe_aumann}, are no longer possible under
our assumptions.

Many of our results exploit tools widely used in asset 
pricing theory. The connection with finance, first recognized 
by Gilboa and Samuelson \cite{gilboa_samuelson}, becomes 
fully clear in the context of gambling which provides a 
natural bridge between these apparently distinct areas 
of economic theory. One may conjecture that additional 
interesting implications may be drawn from the application
of these techniques. We prove e.g. in Corollary 
\ref{cor capacity} that every capacity is equivalent,
in an appropriate sense, to a subadditive one.

The paper is organized as follows. After describing the 
setting and our basic assumption \assref{structure}, in 
Section \ref{sec gamble} we introduce and discuss gambles.
In Section \ref{sec coherence} we introduce the crucial 
property of coherence and prove that this is necessary 
and sufficient for SEU. In Section \ref{sec arbitrage} we 
clarify the terms of the equivalence between coherence 
and arbitrage%
\footnote{
Arbitrage in turn is a concept strictly related to that of coherent 
betting scheme due to de Finetti \cite{de_finetti}, from which 
we borrow our terminology.
}
a central concept in asset pricing theory. In discussing the 
limits of coherence we stress that this property disregards 
the different level of complexity implicit in gambles and the 
role of individual aversion to complexity. In Section \ref{sec meu}, 
we thus discuss a weaker notion, $\Theta_0$-coherence,
and show that this is equivalent to MEU while in Section
\ref{sec ceu} we characterize CEU in terms of $\Theta_1$-%
coherence. Eventually, in Section \ref{sec capacity} we obtain 
some results on subjective capacities not involving coherence. 
We close in Section \ref{sec literature} with a short discussion 
of the existing literature.

\subsection{Notation}
The symbol  $\Fun{X,Y}$ indicates the family of all functions
$f:X\to Y$, $\Fun X$ is short for $\Fun{X,\R}$ while $\Fun[0]{X}$ 
designates the collection of reaal valued functions vanishing 
outside some finite set. On $\Fun[0]X$ we define the norm
$\norm f
	=
\sum_{x\in X}\abs{f(x)}$.
The symbol $ba(\A)$ (resp. $\Prob\A$, resp. $\Prob[0]\A$) 
denotes the family of bounded, finitely additive set functions 
(resp. probabilities, resp. probabilities with finite support) on 
some algebra of subsets of $\Omega$ and for each 
$\lambda\in ba(\A)$ we use the symbol $L^1(\lambda)$ 
for functions which are integrable with respect to $\lambda$. 
If $\mathcal H\subset\Fun\Omega$ we also write
\begin{equation}
\label{baF}
ba(\A,\mathcal H)
	=
\big\{\lambda\in ba(\A):
\mathcal H\subset L^1(\lambda)\big\}.
\end{equation}
Following de Finetti \cite{de_finetti}, a set is identified with its 
characteristic function, so that $A(x)$ is either $1$, if $x\in A$, 
or else $0$. If $A\subset X$ and $f,g\in\Fun X$ we write
$
\mix fAg
	=
f A+g A^c
$.

\section{The setting.}
\label{sec setting}

We shall use the following version of Savage setting.
$\Omega$ and $X$ are two completely arbitrary sets,
the former describing {\it states of nature}, the latter 
the space of {\it outcomes} (or prizes). The set $\Act$ 
of acts, as usual, consists of maps of $\Omega$ into 
$X$.  Following a well established tradition%
\footnote{
See e.g. Anscombe and Aumann \cite{anscombe_aumann}, 
Ghirardato et al. \cite{ghirardato_et_al} and, more recently, 
Gilboa and Samuelson \cite{gilboa_samuelson}.
},
we take two functions, $V\in\Fun\Act$ and $u\in\Fun X$, 
as given. The former describes preferences of the decision 
maker over acts, the second fixes the criterion to evaluate 
outcomes and is taken as a given constraint. In most
papers $u$ is just the restriction of $V$ to constant acts.

The functions $u$ and $V$ induce the order intervals
\begin{equation}
\label{I}
\IV(f)
	=
\{h\in\Act:V(h)\le V(f)\}
\qand
I_u(y)
	=
\{x\in X:u(x)\le u(y)\}.
\end{equation}
Associated with $u$ is also the linear space $\Lu$ spanned 
by the set $\{u(f):f\in\Act\}\subset\Fun\Omega$ as well as 
the collection $\Ru$ of all sets of the form $\{u(f)>t\}$ or 
their complements.

Concerning the structure of the set $\Act$ we assume the 
following:

\begin{Ass}
\label{structure}
The set $\Act$ contains two constant acts with values 
$x,y\in X$ such that%
\footnote{
The mixture operation on $\Fun{\Omega,X}$ is defined
by writing $\mix fAg=fA+gA^c$.
}:
(a) 
$V(y)>V(x)$,
(b)
$\mix yAf,\mix xAf\in\Act$ and 
(c)
$V(\mix yAf)\ge V(\mix xAf)$
for each $A\in\Ru$ and $f\in\Act$.
\end{Ass}

Notice that, by \assref{structure}, the collection 
\begin{equation}
\label{Au}
\A_u
	=
\{A\subset\Omega:\mix yAx\in\Act\}
\end{equation}
is an algebra containing $\Ru$.

The structure of the set $\Act$ as specified in \assref{structure}
is particularly poor and, borrowing terminology from Kopylov
\cite{kopylov_7}%
\footnote{
The meaning of small domain in \cite{kopylov_7} is, however,
quite different and refers to the restriction of subjective 
probability to a $\lambda$-system, rather than an algebra.
}, 
this qualifies $\Act$ as a ``small'' domain, in contrast with 
the full domain $\Fun{\Omega,X}$ adopted by Savage and 
by most of the following authors. Although the point may 
seem a minor one, it is not so, we believe, for several 
reasons leaving aside mathematical generality. First, if 
the state space $\Omega$ is infinite and $\Act$ coincides 
with the full domain $\Fun{\Omega,X}$, then the assumption 
that preferences are complete is extremely restrictive and 
hardly plausible from a behavioural point of view%
\footnote{
In his well known criticism, Aumann \cite{aumann} refers 
to completeness as ``perhaps the most questionable'' 
axiom.
}. 
Second, although property \tiref c above recalls Savage
sure thing principle, it is definitely a very weak version 
of it and most criticisms that apply to this classical
property, such as those arising from the Ellsberg 
paradox, do not apply here. Third, if $\Act$ does not 
contain all simple acts, the strategy of proof inaugurated
by Savage which requires to establish expected utility for 
all simple functions is simply not feasible%
\footnote{
Even in the subjective approach adopted by Ghirardato 
et al \cite{ghirardato_et_al} all simple acts must be
available.
}. 
Eventually, one cannot obtain $u$ as the restriction 
of $V$ to constant acts. In fact there is no self evident 
criterion to test whether $u$ and $V$ are compatible 
with one another and some explicit criteria will have
to be introduced. The lack of a direct link between the 
two utility functions may be appropriate to describe 
principal/agent situations in which, as suggested in 
\cite{gilboa_samuelson}, an agent chooses acts in 
accordance with the utility function $V$ while the 
principal evaluates the agent's choices by the 
consequences they produce.

Nevertheless, \assref{structure} permits to associate $V$ 
with the capacity
\begin{equation}
\label{capacity}
\gv(A)
=
\frac{V(\mix yAx)-V(x)}{V(y)-V(x)},
\qquad
A\in\A_u.
\end{equation}

\section{Psychological Gambles or 
Anscombe and Aumann need not 
play roulette.}
\label{sec gamble}

We introduced above the idea of a psychological bet on
$f$ in the following terms: taking $f$ as a reference, will 
the final outcome be dominated by $f$, i.e. will the event 
$\IV(f)$ occur? Viewed as a function,  $\IV(f)$ represents 
thus the payoff of the strategy $\delta_f$ which consists of 
betting one unit just on $f$. We generalize and formalize 
this intuitive description into the following:

\begin{definition}
A psychological gamble on $\Act$ (or simply a gamble) 
is an element of the set 
\begin{equation}
\label{Theta}
\Theta
	=
\Big\{\theta\in\Fun[0]{\Act,\R_+}:\sum_{f\in\Act}\theta(f)\le1\Big\}.
\end{equation}
\end{definition}

For given $\theta\in\Theta$, the quantity $\theta(f)$ describes 
the odds posted on the event $\IV(f)$ so that the gamble final 
payoff is%
\footnote{
For a comparison of \eqref{Pa} with the payoff of gambles 
in the approach of de Finetti see \cite{heath_sudderth_72}.
} 
\begin{equation}
\label{Pa}
\IV(\theta)
	=
\sum_{f\in\Act}\theta(f)\IV(f).
\end{equation}
The Dirac gambling strategy $\delta_f$ described above
induces the embedding of $\Act$ into $\Theta$ mentioned 
in the introduction. Abusing notation, we identify $f$ with 
$\delta_f$ and view acts as elementary gambles. Thus, 
the utility $V(f)$ of the act $f\in\Act$ is equivalently 
interpreted as the value $V(\delta_f)$ of the corresponding 
bet. This suggests a natural extension of the utility function 
$V$ to a linear functional on $\Theta$ defined as
\begin{equation}
\label{linear}
V(\theta)
	=
\sum_{f\in\Act}\theta(f)V(f),
\qquad
\theta\in\Theta
\end{equation}
which we interpret as the value of the gamble $\theta$. 
The next Lemma proves that this interpretation is consistent 
with the idea that the economic value of a gamble should 
reflect its perspective yields.

\begin{lemma}
\label{lemma value}
The value of a gamble is a linear function of its payoff.
\end{lemma}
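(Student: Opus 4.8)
The statement is that the value $V(\theta)$ defined in \eqref{linear} depends on $\theta$ only through the payoff $\IV(\theta)$ defined in \eqref{Pa}. In other words, if $\theta,\theta'\in\Theta$ satisfy $\IV(\theta)=\IV(\theta')$ as elements of $\Fun\Omega$, then $V(\theta)=V(\theta')$; equivalently, there is a well-defined linear map sending the payoff $\IV(\theta)$ to the scalar $V(\theta)$. The plan is to reduce this to the injectivity, on the relevant quotient, of the assignment $f\mapsto\IV(f)$ combined with linearity. More precisely, I would first observe that both $\theta\mapsto V(\theta)$ and $\theta\mapsto\IV(\theta)$ are linear in $\theta$ (the former by definition \eqref{linear}, the latter by \eqref{Pa}), so it suffices to show that $\IV(\theta)=0$ forces $V(\theta)=0$. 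Then the content is really a statement about the acts $f$ appearing with positive weight in $\theta$: if a nonnegative combination of the indicator functions $\IV(f)$ vanishes identically, then the same combination of the numbers $V(f)$ vanishes.

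The key step is therefore to understand when $\sum_{f}\theta(f)\IV(f)=0$. Since each $\IV(f)$ is a $\{0,1\}$-valued function (an indicator of an order interval in $\Act$) and the weights $\theta(f)$ are nonnegative, the sum vanishes at a state $\omega$ only if $\IV(f)(\omega)=0$ for every $f$ in the (finite) support of $\theta$. Now I would use the structure of the intervals $\IV(f)=\{h\in\Act:V(h)\le V(f)\}$: these are nested, being sublevel sets of the single function $V$. Consequently, among the finitely many acts $f_1,\dots,f_n$ in the support of $\theta$, ordering them so that $V(f_1)\le\cdots\le V(f_n)$, we get $\IV(f_1)\subset\cdots\subset\IV(f_n)$, and hence $\IV(\theta)=\sum_i\theta(f_i)\IV(f_i)$ is a nonnegative step function whose value on the ``shell'' $\IV(f_k)\setminus\IV(f_{k-1})$ is $\sum_{i\ge k}\theta(f_i)$. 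This step function is identically zero iff all these partial sums vanish on nonempty shells, which (walking from the largest interval inward) forces $\theta(f_i)=0$ for every $i$ such that the corresponding shell is nonempty. Acts whose shell is empty contribute nothing, so in all cases $V(\theta)=\sum_i\theta(f_i)V(f_i)=0$.

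The cleanest way to phrase the conclusion is: the map $\IV(\theta)\mapsto V(\theta)$ is well defined and linear on the convex cone $\{\IV(\theta):\theta\in\Theta\}\subset\Fun\Omega$, because $\IV(\theta)=\IV(\theta')$ implies, by the argument above applied to a signed decomposition, that $V(\theta)=V(\theta')$. I would present this by noting that $\IV$ extends to a linear map on the linear span of $\Theta$ (which is all of $\Fun[0]{\Act,\R}$), and likewise $V$; the claim is that $\ker\IV\subset\ker V$ on this span, which is exactly the vanishing statement established in the previous paragraph once one splits a general element of $\Fun[0]{\Act,\R}$ into positive and negative parts and notes that the nesting argument only used that the $\IV(f)$ are totally ordered by inclusion.

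The main obstacle is a subtle one about what ``value is a function of payoff'' should mean when the correspondence $f\mapsto\IV(f)$ is not injective: distinct acts $f\ne g$ with $V(f)=V(g)$ have $\IV(f)=\IV(g)$, which is harmless, but one must make sure that the extension is genuinely well defined and does not secretly depend on a choice of representative — the nesting/shell argument is what guarantees this. A second, more delicate point is that the paper works on a ``small domain'': $\Act$ need not contain all simple acts, so the intervals $\IV(f)$ can be quite sparse, and in particular the shell $\IV(f_k)\setminus\IV(f_{k-1})$ may be empty even when $V(f_k)>V(f_{k-1})$. The argument above already accommodates this — empty shells impose no constraint on the weights but also contribute zero to both $\IV(\theta)$ and $V(\theta)$ — so no genericity or richness assumption on $\Act$ is needed, but this is exactly the place where the proof must be written carefully rather than appealing to a picture.
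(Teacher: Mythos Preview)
Your argument is essentially the paper's, organised as a kernel inclusion $\ker\IV\subset\ker V$ rather than a direct comparison of two gambles. Both rest on the same observation: the order intervals $\IV(f)$ are totally ordered by inclusion, so the payoff $\IV(\theta)$ is a step function whose levels are the tail sums $\sum_{i\ge k}\theta(f_i)$, and these determine the weighted levels $\sum_j c_j v_j=V(\theta)$. The paper first collapses acts with equal $V$-value (your quotient), then compares two gambles level by level to conclude they are equivalent in the sense of its equation \eqref{modulo}; you instead look at a single signed element and read off that all level sums vanish. These are the same computation.

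Two small corrections. First, $\IV(f)=\{h\in\Act:V(h)\le V(f)\}$ is a subset of $\Act$, so the payoff $\IV(\theta)$ is an element of $\Fun\Act$, not $\Fun\Omega$; you should be evaluating at acts $h$, not at states $\omega$. Second, your worry about empty shells is unfounded and can be dropped: whenever $V(f_k)>V(f_{k-1})$ the act $f_k$ itself lies in $\IV(f_k)\setminus\IV(f_{k-1})$, so every shell between distinct $V$-levels is nonempty regardless of how small the domain $\Act$ is. With this in hand the detour through $\zeta^+,\zeta^-$ is unnecessary: the partial-sum argument applies verbatim to a signed $\zeta\in\Fun[0]\Act$, since evaluating $\IV(\zeta)$ at the successive $f_k$ already forces each level sum $c_j=\sum_{V(f_i)=v_j}\zeta(f_i)$ to vanish, hence $V(\zeta)=\sum_j c_j v_j=0$.
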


\begin{proof}
It is easily seen that both maps, $\IV$ and $V$ defined in
\eqref{Pa} and \eqref{linear}, assign the same value to 
gambles $\theta$ and $\theta'$ which are equivalent 
according to the following criterion:
\begin{equation}
\label{modulo}
\sum_{f\sim g}\theta(f)
	=
\sum_{f\sim g}\theta'(f),
\qquad
g\in\Act.
\end{equation}
Thus, if $\IV(\theta)=\IV(\eta)$ we can assume, up to equivalence,
that the support of $\theta$ and $\eta$ are of the 
form $\{f_1,\ldots, f_n\}$ and $\{g_1,\ldots, g_k\}$ 
respectively with $V(f_i)<V(f_{i+1})$ and $V(g_j)<V(g_{j+1})$.
If $V(f_1)>V(g_1)$ then letting $h$ be the least valuable act 
between $g_2$ and $f_1$ we obtain
$\IV(\theta)(f_1)= \IV(\theta)(h)$ 
while 
$\IV(\eta)(f_1)\ne \IV(\eta)(h)$,
a contradiction. Thus necessarily $V(f_1)=V(g_1)$ and 
$\sum_{i\ge 1}\theta(f_i)
	=
\sum_{j\ge 1}\eta(g_j)$. 
Applying the same argument recursively we conclude that 
$n=k$ and that $V(f_p)=V(g_p)$ and 
$\sum_{i\ge p}\theta(f_i)
	=
\sum_{j\ge p}\eta(g_j)$
for each $p=1,\ldots,n$.
Thus, if $\IV(\theta)=\IV(\eta)$ then $\theta$ and $\eta$ must 
be equivalent in the sense of \eqref{modulo} so that
$V(\theta)
	=
V(\eta)$
which proves the claim.
\end{proof}

Of course, in purely mathematical terms, the set $\Theta$
of gambles is a mixture space. One may thus take it as the 
choice set, in place of $\Act$, and adopt the classical 
assumptions of Herstein and Milnor \cite{herstein_milnor}
to justify the linear formula \eqref{linear}. Our approach 
is rather to show that the expected utility representation 
is a property related to the {\it extension} of the original 
utility function from acts to gambles.

A gamble may be defined similarly on any partially 
ordered set. The special feature of $\Theta$ is that 
each gamble on $\Act$ may be considered from a 
{\it conditional} perspective, i.e. fixing the state 
$\omega$ and identifying an act $f$ with its 
consequence $f(\omega)$. The {\it conditional
value} of gamble $\theta$ is thus defined as
\begin{equation}
u(\restr{\theta}{\omega})
	=
\sum_{f\in\Act}\theta(f)u\big(f(\omega)\big),
\qquad
\theta\in\Theta
\end{equation}
and induces the partial order defined by writing
\begin{equation}
\label{ge_u}
\theta\ge_u\eta
\qiff
u(\restr\theta\omega)
	\ge
u(\restr\eta\omega)
\qtext{for all}
\omega\in\Omega.
\end{equation}

It may be instructive to compare conditional gambles with 
the set $\mathcal K$ of Anscombe and Aumann acts, i.e. the
set of {\it all} maps from $\Omega$ to $\Prob[0]X$. Indeed 
for fixed $\omega$ and with $\norm\theta=1$, a conditional 
gamble $\restr\theta\omega$ may be interpreted as a probability 
supported by the finite set 
$\{f(\omega):\theta(f)>0\}$%
\footnote{
The symbol $\mathcal K$ is taken from Fishburn 
\cite[p. 176]{fishburn_book}. Curiously, a probability distribution 
on $X$ with finite support -- i.e. an element of $\Prob[0]X$ -- is 
referred to as a gamble by Savage \cite[5.2]{savage}.
}. 
This would intrinsically imply an objectivist interpretation of
gambles which contrasts with our point of view. But even 
accepting this interpretation, it is important to remark that
conditional gambles form a strict subset of $\mathcal K$%
\footnote{
Using the axiom of choice, the set $\mathcal K$ has cardinality
$\abs X^{\abs\Omega}
\cdot
2^{\aleph_0\cdot\abs\Omega}$
while, even assuming $\Act=\Fun{\Omega,X}$, the cardinality
of $\Theta$ is just 
$\abs X^{\abs\Omega}
\cdot
2^{\aleph_0}$.
These two cardinals coincide if either $\Omega$ is a finite set 
or $\abs X\ge2^{\aleph_0}$.
}. 
In fact, upon revealing the true state $\omega$, nature only 
determines the {\it support} of the resulting distribution but 
not its size nor the probability weights. Given an arbitrary set
$P_1,\ldots,P_n\in\Prob[0]X$ and a partition $A_1,\ldots,A_n$
of $\Omega$, it is not possible, e.g., to construct an act 
$a\in\mathcal K$ such that $a=P_n$ on $A_n$, and it is enough to read 
the proof of the Anscombe and Aumann theorem with an 
infinite state space given by Fishburn 
\cite[Theorem 13.3]{fishburn_book}
to understand the importance for this theory of assuming such
a large domain as the choice set.

\section{Coherence and its limits.}
\label{sec coherence}

Given the lack of a direct relation between the two
functions $u$ and $V$, some general criteria to 
assess whether they are compatible may be 
formulated as follows:

\begin{definition}
\label{def coherence}
The utility functions $V$ and $u$ are said to be:
\begin{enumerate}[(a).]
\item
simply coherent with one another if  $f\ge_u g$ implies 
$V(f)\ge V(g)$, for every $f,g\in\Act$;
\item
coherent with one another if $\theta\ge_u \eta$ implies 
$V(\theta)\ge V(\eta)$, for every $\theta,\eta\in\Theta$%
\footnote{
In passing we remark that, restricting to gambles satisfying 
$\norm\theta=1$ would result in a definition of coherence 
too weak for our purposes.
}.
\end{enumerate}
\end{definition}

To appreciate the difference between these two conditions
notice that, given \assref{structure}, simple coherence implies
$u(y)>u(x)$ while, assuming coherence, we obtain $u(x)=tV(x)$ 
and $u(y)=tV(y)$ for some $t>0$. This in turn implies that 
coherence is preserved if we replace $u$ with the transform 
$A+Bu$ and $V$ with $A+tBV$, provided $B>0$. In other 
words, with no loss of generality we shall henceforth assume
\begin{equation}
\label{normalize}
V(y)=u(y)=1
\qand
V(x)=u(x)=0.
\end{equation}

Simple coherence is a basic condition, ubiquitous in this
literature and roughly corresponds to ({\bf P7}) of Savage, 
whereas coherence is its natural extension from acts to 
gambles. While the latter condition is equivalent to SEU, 
as we will show in Theorem \ref{th GEU}, the former is 
way to poor to deliver a significant representation of utility. 
Nevertheless, under additional special assumptions the 
two conditions coincide. Ghirardato et al \cite{ghirardato_et_al}, 
e.g., assume that $\Act$ contains all constant acts and 
that $V$ (and therefore $u$) has convex range. Thus, for 
each $\theta\in\Theta$ there exists $f_\theta\in\Act$ 
such that $V(\theta)=V(f_\theta)$ which makes the two 
preceding conditions identical.

If $f$ is any act, we shall often use Assumption 
\assref{structure} to obtain the truncations:
\begin{equation}
\label{truncate}
f^k=\mix f{\{u(f)\le k\}}x,\quad
f_k=\mix f{\{u(f)> -k\}}x
\qand
f(k)=\mix f{\{-k<u(f)\le k\}}x,
\qquad
k\ge0.
\end{equation}

\begin{theorem}
\label{th GEU}
Assume \assref{structure}. The utility functions $V$ and 
$u$ are coherent if and only if they satisfy
\begin{equation}
\label{GEU}
V(f)
	=
\Phi\big(u(f)\big)+\int_\Omega u(f)dm,
\qquad
f\in\Act
\end{equation}
in which 
(a)
$\Phi$ is a positive linear functional on $\Lu$  
vanishing on bounded functions and 
(b)
$m\in\Prob{\A_u,\Lu}$.
The representation \eqref{GEU} is unique.
\end{theorem}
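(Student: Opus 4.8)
The plan is to establish the two implications separately; sufficiency is routine, necessity is the substantial part. For sufficiency, suppose $V$ is given by \eqref{GEU}. Extending $V$ to $\Theta$ via \eqref{linear} and using that $\Phi$ and integration are linear, for $\theta\in\Theta$ one has $V(\theta)=\Phi(g_\theta)+\int_\Omega g_\theta\,dm$ where $g_\theta:=\sum_f\theta(f)u(f)\in\Lu$ is precisely the function $\omega\mapsto u(\restr\theta\omega)$. Hence $\theta\ge_u\eta$ means $g_\theta\ge g_\eta$ pointwise, and positivity of $\Phi$ together with $m\ge0$ yield $V(\theta)\ge V(\eta)$, which is coherence.

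For necessity, the first step is linearization. The map $T\colon\Fun[0]{\Act}\to\Lu$ given by $T\theta=\sum_f\theta(f)u(f)$ is linear and onto, with $(T\theta)(\omega)=u(\restr\theta\omega)$; also every element of $\Lu$ is $\A_u$-measurable, the level sets of each $u(f)$ lying in $\Ru\subseteq\A_u$. Splitting any $\theta\in\Fun[0]{\Act}$ into its positive and negative parts and rescaling both into $\Theta$, coherence forces $T\theta=0\Rightarrow V(\theta)=0$ and $T\theta\ge0\Rightarrow V(\theta)\ge0$, where $V$ now denotes the linear extension \eqref{linear}. Thus $V$ factors as $V=L\circ T$ for a unique \emph{positive} linear functional $L$ on $\Lu$, which by \eqref{normalize} satisfies $L(\mathbf 1)=1$ (note $\mathbf 1=u(y)\in\Lu$ and, for $A\in\A_u$, $\mathbf 1_A=u(\mix yAx)\in\Lu$). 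Define $m(A):=L(\mathbf 1_A)=\gv(A)$ for $A\in\A_u$: by positivity and additivity of $L$ this is a finitely additive probability, and approximating any bounded $g\in\Lu$ uniformly by $\A_u$-simple functions (which again lie in $\Lu$) shows $L(g)=\int_\Omega g\,dm$ for all bounded $g\in\Lu$.

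The crux is to prove $\Lu\subseteq L^1(m)$, i.e.\ $m\in\Prob{\A_u,\Lu}$. The recurring device is that \assref{structure}(b) allows truncation inside $\Lu$: for $h\in\Act$ and $n\ge0$ the functions $u(h)^{\pm}$, $u(h)\mathbf 1_{\{u(h)>n\}}$, $u(h)\mathbf 1_{\{u(h)\le n\}}$ and $u(h)\mathbf 1_{\{u(h)\le-n\}}$ all belong to $\Lu$, each being (up to a sign) of the form $u(\mix xBh)$ for a suitable $B\in\Ru$. Since $u(h)\mathbf 1_{\{u(h)>n\}}\ge n\,\mathbf 1_{\{u(h)>n\}}$ with both sides in $\Lu$, positivity of $L$ gives $n\,m(u(h)>n)\le L\!\big(u(h)\mathbf 1_{\{u(h)>n\}}\big)\le L(u(h))$, and symmetrically from below; with the identity $\int_\Omega(u(h)\wedge k)\,dm=L\!\big(u(h)\mathbf 1_{\{u(h)\le k\}}\big)+k\,m(u(h)>k)$ this bounds $\sup_k\int_\Omega(|u(h)|\wedge k)\,dm$. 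The standard fact of finitely additive integration — a nonnegative measurable $g$ with $\sup_k\int(g\wedge k)\,d\lambda<\infty$ lies in $L^1(\lambda)$, its integral equal to that supremum — then yields $u(f)^{\pm}\in L^1(m)$, hence $|u(f)|\in L^1(m)$; since every $g\in\Lu$ is dominated in modulus by a finite nonnegative combination of such moduli, itself in $\Lu$, we get $\Lu\subseteq L^1(m)$. Now set $\Phi:=L-\int_\Omega(\cdot)\,dm$ on $\Lu$. It is linear, vanishes on bounded functions (where $L$ is already the integral), and is positive: for $g\ge0$ in $\Lu$, $\int_\Omega g\,dm=\sup\{L(s):s\ \A_u\text{-simple},\,0\le s\le g\}\le L(g)$, using $\int_\Omega s\,dm=L(s)$ and $L(g-s)\ge0$. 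Therefore $V(f)=L(u(f))=\Phi(u(f))+\int_\Omega u(f)\,dm$, which is \eqref{GEU}.

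For uniqueness, if $\Phi_i,m_i$ ($i=1,2$) both represent $V$, then taking $f=\mix yAx$, so that $u(f)=\mathbf 1_A$ is bounded and $\Phi_i(u(f))=0$, gives $m_1(A)=V(\mix yAx)=m_2(A)$ for every $A\in\A_u$, hence $m_1=m_2$; then $\Phi_1(u(f))=V(f)-\int_\Omega u(f)\,dm_1=\Phi_2(u(f))$ for all $f\in\Act$, and since $\{u(f):f\in\Act\}$ spans $\Lu$ and the $\Phi_i$ are linear, $\Phi_1=\Phi_2$. The point I expect to be the real obstacle is $\Lu\subseteq L^1(m)$: monotone convergence fails for charges, so one cannot simply pass to the limit, and the argument survives only because the tails $u(h)\mathbf 1_{\{u(h)>n\}}$ remain inside $\Lu$ and are therefore priced by $L$ — which is exactly what \assref{structure}(b) provides.
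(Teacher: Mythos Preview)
Your proof is correct and takes a genuinely different route from the paper's. The paper establishes two structural properties of the pair $(V,u)$ on $\Fun[0]\Act$ --- a directedness condition \eqref{directed} and a sign condition \eqref{conglo} --- and then invokes an external representation theorem \cite[Theorem~3.3]{JCA_2018} to obtain $\Phi$ and $m$ in one stroke, with the integrability $\Lu\subset L^1(m)$ coming out as a byproduct. You instead factor $V$ through $T$ to get a positive linear $L$ on $\Lu$, read off $m=L\restriction\text{indicators}$, identify $L$ with $\int(\cdot)\,dm$ on bounded elements by uniform simple approximation, and then set $\Phi:=L-\int(\cdot)\,dm$. The integrability step, which you correctly flag as the crux, you handle directly via the tail estimate $k\,m(u(h)^+>k)\le L(u(h)^+)$ combined with the elementary fact that $\sup_k\int(g\wedge k)\,dm<\infty$ forces $g\in L^1(m)$ even for charges. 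Your approach is longer but fully self-contained and makes the role of \assref{structure}(b) --- that tail truncations stay in $\Lu$ and are therefore priced by $L$ --- completely explicit; the paper's approach is shorter on the page but defers the analytic content to the cited result.

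One small imprecision worth cleaning up: the inequality $L\big(u(h)\mathbf 1_{\{u(h)>n\}}\big)\le L(u(h))$ requires $u(h)\ge0$, so the tail argument should be run separately on $u(h)^+$ and $u(h)^-$ (as you in fact conclude), not on $u(h)$ directly. For the negative part the relevant sets are $\{u(h)\le -k\}\in\Ru$ rather than $\{u(h)<-k\}$, but the same decomposition goes through.
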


\begin{proof}
If $V$ and $u$ satisfy \eqref{GEU} they are clearly 
coherent. Conversely, assume that $u$ and $V$ are
coherent. By linearity the extension of $V$ and $u$ 
to gambles may be stretched still from $\Theta$ to 
$\Fun[0]\Act$ by letting
\begin{equation}
V(\zeta)
	=
\sum_{f\in\Act}\zeta(f)V(f)
\qand
u(\restr\zeta\omega)
	=
\sum_{f\in\Act}\zeta(f)u(f(\omega))
\qquad
\zeta\in\Fun[0]\Act.
\end{equation}
For fixed $f\in\Act$, $f_1=\mmix f{u(f)>u(x)}x$ and
$f_2=\mmix f{u(f)\le u(x)}x$ are elements of $\Act$;
moreover, $u(f_1)=u( f)^+$ and $u(f_2)=- u( f)^-$.
Therefore, letting $\beta_f\in\Fun[0]\Act$ be such 
that $\beta_f(f_1)=1$, $\beta_f(f_2)=-1$ and $\beta_f=0$
elsewhere, we obtain that
\begin{equation*}
u(\beta_f)
=
\babs{u( f)}.
\end{equation*}
Thus, for each $\zeta\in\Fun[0]\Act$ there exists 
$\zeta^*\in\Fun[0]\Act$, defined by letting 
$\zeta^*(h)
=
\sum_{f\in\Act}\abs{\zeta(f)}\beta_f(h)$,
such that
\begin{equation}
\label{directed}
\babs{u(\restr\zeta\omega)}
	\le
\sum_{f\in\Act}\babs{\zeta(f)u(f(\omega))}
	\le
\sum_{f\in\Act}\babs{\zeta(f)}\sum_h\beta_f(h)u(h(\omega))
	=
u(\restr{\zeta^*}\omega).
\end{equation}

Moreover, if $V(\zeta)<0$ then, letting 
$\bar\zeta
	=
\zeta/(\norm\zeta\vee1)$,
$\bar\zeta^+,\bar\zeta^-\in\Theta$ and 
$V(\bar\zeta^+)<V(\bar\zeta^-)$, by coherence, there exists 
$\omega\in\Omega$ such that 
$u(\restr{\bar\zeta^+}\omega)
	<
 u(\restr{\bar\zeta^-}\omega)$
i.e.
\begin{equation}
\label{conglo}
\inf_{\omega\in\Omega}u(\restr\zeta\omega)
	<
0.
\end{equation}

Given \eqref{directed} and \eqref{conglo}, it follows from 
\cite[Theorem 3.3]{JCA_2018} the existence of a positive 
linear functional $\Phi$ on $\Lu$ which vanishes on bounded 
functions and of $\bar m\in ba(\Omega)_+$ such that
\begin{align}
\label{Th}
u(\theta)\in L(\bar m)
\qand
V(\theta)
	=
\Phi\big(u(\theta)\big)
+
\int_\Omega u(\theta)d\bar m,
\qquad
\theta\in\Theta.
\end{align}
In view of \eqref{normalize}, $\bar m$ is a probability. Moreover,
the restriction $m$ of $\bar m$ to $\A_u$ coincides with$\gv$
and is thus unique. For fixed $f\in\Act$, it follows from the 
inclusion $u(f)\in L^1(m)$ that
\begin{equation}
\label{m}
\int u(f)dm
	=
\lim_k\int_{\{\abs{u(f)}\le k\}}u(f)dm
	=
\lim_k\int u\big( f(k)\big)dm
	=
\lim_kV(f(k))
\end{equation}
and therefore
\begin{equation}
\label{Phi}
\Phi(u(f))
	=
\lim_kV(\mix f{\{\abs{u(f)}>k\}}x).
\end{equation}
This proves uniqueness.
\end{proof}

If the representation \eqref{GEU} is perhaps not a surprising 
result {\it per se}, some of its features deserve some comments. 
First we highlight that the crucial integrability condition 
$\Lu\subset L^1(m)$ is endogenous and not an assumption.
This is remarkable since, under \assref{structure} the random 
quantity $u(f)$ need not be bounded. In Savage setting, the 
function $u$ has to be bounded on $X$ as a consequence of 
the existence of a countable partition of $\Omega$ into non 
null sets and of the full domain assumption
$\Act=\Fun{\Omega,X}$ (see \cite[Theorem 14.5]{fishburn_book}). 
Likewise, in the extension of Anscombe and Aumann model to 
an infinite state space \cite[Theorem 13.3]{fishburn_book}, 
boundedness of the function $u(f)$ for all acts $f\in\Act$ 
is a direct consequence of the assumption that $\mathcal K$ 
contains {\it all} possible horse/roulette lotteries%
\footnote{
See the proof of properties $S3$ and $S5$ \cite[p. 183-185]
{fishburn_book}. 
}.
Similar conclusions apply to much of this literature. The 
unbounded part of $u(f)$ is explicitly evaluated by the 
functional $\Phi$ which may thus rightfully be interpreted 
as a {\it bubble in preferences}. This is a novel feature in 
the theory of decision making under uncertainty. The 
economic role of bubbles emerges in those situations 
in which $f\ge_ug$ but $f(k)\not\ge_ug(k)$.

Eventually, we observe that coherence is necessary and 
sufficient for \eqref{GEU}, i.e. that indeed SEU is a property 
involving gambles, although this is not always apparent in 
most approaches. The set-up of Anscombe and Aumann is, 
in contrast, not at all necessary.

We can dispose of preference bubbles with the aid
of the following continuity axiom:

\begin{Ass}
\label{A_cts}
For every $f,g,h\in\Act$ with $V(g)>V(f)>V(h)$ there exists 
$h\in\Act$ with $V(h)>V(x)$ such that $V(\mix yAx)<V(h)$ 
implies $V(h)<V(\mix f{A^c}x)<V(g)$ for each $A\in\A_u$.
\end{Ass}

This property, although not necessary for our results, makes 
the comparison with the literature easier. It is easily seen 
that, given \assref{structure}, assumption \assref{A_cts}  
implies
\begin{equation}
\lim_{\gv(A)\to0}V(\mix f{A^c}x)
	=
V(f),
\qquad
f\in\Act.
\end{equation}
In \eqref{GEU} the inclusion $\Lu\subset L^1(m)$ implies 
$\gv(\abs{u(f)}>k)=m(\abs{u(f)}>k)\to0$. Thus, under
\assref{A_cts}
\begin{align*}
V(f)
	=
\lim_kV( f(k))
	=
\lim_k\int_{\{\abs{u(f)}\le k\}}u(f)dm
	=
\int u(f)dm,
\qquad
f\in\Act
\end{align*}
i.e. $\Phi=0$. Conversely, if \eqref{GEU} holds with $\Phi=0$
and if $m(A_n)\to0$ then
\begin{align*}
V(f)
	=
\int u(f)dm
	=
\lim_n\int_{A_n^c}u(f)dm
	=
\lim_n\int u(\mix f{A_n^c}x)dm
	=
\lim_nV(\mix f{A_n^c}x).
\end{align*}

\begin{corollary}
\label{cor GEU}
Assume \assref{structure}. The utility functions $V$ and $u$ 
are coherent and \assref{A_cts} is satisfied if and only  if the 
representation \eqref{GEU} holds with $\Phi=0$.
\end{corollary}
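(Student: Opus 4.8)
The plan is to read the corollary off from Theorem~\ref{th GEU} together with the two displayed computations that precede its statement, adding only a small measure‑theoretic remark at the end.

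\emph{Necessity.} Assuming $V,u$ coherent and \assref{A_cts}, I would invoke Theorem~\ref{th GEU} to obtain the representation \eqref{GEU} with a positive linear $\Phi$ on $\Lu$ vanishing on bounded functions and $m\in\Prob{\A_u,\Lu}$, and then show $\Phi=0$. Fixing $f\in\Act$, I would recall from the proof of Theorem~\ref{th GEU} that $m$ and $\gv$ agree on $\A_u$; since $u(f)\in L^1(m)$, Chebyshev's inequality makes $\gv(\abs{u(f)}>k)=m(\abs{u(f)}>k)$ tend to $0$, and the same estimate gives $\gv(C_k^c)\to0$ for $C_k=\{-k<u(f)\le k\}$, which lies in the algebra $\A_u$. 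Plugging $A=C_k^c$ into the limit entailed by \assref{A_cts}, namely $\lim_{\gv(A)\to0}V(\mix f{A^c}x)=V(f)$, and noting that $\mix f{C_k}x=f(k)$, I get $V(f)=\lim_kV(f(k))$. On the other hand $u(f(k))$ is bounded, so $\Phi(u(f(k)))=0$, and, as already recorded in \eqref{m}, $\lim_kV(f(k))=\lim_k\int u(f(k))dm=\int u(f)dm$. Comparing the two limits gives $\Phi(u(f))=V(f)-\int u(f)dm=0$; since $\{u(f):f\in\Act\}$ spans $\Lu$ and $\Phi$ is linear, $\Phi=0$.

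\emph{Sufficiency.} Conversely, assuming \eqref{GEU} with $\Phi=0$, i.e.\ $V(f)=\int u(f)dm$ for all $f$, coherence is immediate: it is already observed in Theorem~\ref{th GEU} that \eqref{GEU} implies coherence, and concretely one interchanges the finite sum defining $V(\theta)$ with the integral to turn $\theta\ge_u\eta$ into $\int u(\restr\theta\omega)dm(\omega)\ge\int u(\restr\eta\omega)dm(\omega)$. For \assref{A_cts} I would fix $f,g,h\in\Act$ with $V(g)>V(f)>V(h)$, set $\varepsilon=\min\{V(g)-V(f),\,V(f)-V(h)\}>0$, and note that for $A\in\A_u$ one has $V(\mix yAx)=m(A)=\gv(A)$ and $V(f)-V(\mix f{A^c}x)=\int_A u(f)dm$, so any $A$ with $\int_A\abs{u(f)}dm<\varepsilon$ satisfies $V(h)<V(\mix f{A^c}x)<V(g)$; call such $A$ admissible. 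Choosing $n$ with $\int_{\{\abs{u(f)}>n\}}\abs{u(f)}dm<\varepsilon/2$ gives $\int_A\abs{u(f)}dm\le n\,m(A)+\varepsilon/2$, hence every $A$ with $m(A)<\varepsilon/(2n)$ is admissible, and $\delta:=\inf\{m(A):A\in\A_u\text{ not admissible}\}\ge\varepsilon/(2n)>0$ (read $\delta=1$ if every $A$ is admissible). It then remains to produce $e\in\Act$ with $0<V(e)\le\delta$ (writing $e$ for the act whose existence \assref{A_cts} asserts): for then $V(\mix yAx)<V(e)$ forces $m(A)<\delta$, hence admissibility, which is precisely what \assref{A_cts} asks. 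If $m$ attains a value in $(0,\delta]$ I take $e=\mix yAx$ for the corresponding $A$; otherwise I argue that a finitely additive probability can carry only finitely many pairwise essentially disjoint sets of measure bounded away from $0$, so the infimum $\delta$ must in fact be attained at some non‑admissible $A_{*}$, and $e=\mix y{A_{*}}x$ does the job.

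Almost everything above is bookkeeping around Theorem~\ref{th GEU} and the computations already in the text. The step I expect to be the main --- though mild --- obstacle is the last one: converting the analytic smallness threshold $\varepsilon/(2n)$ on the measure $m$ into an honest value $V(e)$ attained on some act, and in particular checking that the infimum $\delta$ is attained when the range of $m$ has a gap just above it.
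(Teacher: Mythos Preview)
Your necessity argument is exactly the paper's: invoke Theorem~\ref{th GEU}, use $\Lu\subset L^1(m)$ to get $\gv(\abs{u(f)}>k)\to0$, apply \assref{A_cts} to obtain $V(f)=\lim_kV(f(k))=\int u(f)dm$, hence $\Phi=0$. For sufficiency the paper is terser than you: it only checks the limit statement $m(A_n)\to0\Rightarrow V(\mix f{A_n^c}x)\to V(f)$ and tacitly identifies this with \assref{A_cts}. You go further and verify \assref{A_cts} as literally stated, producing the witnessing act $e$; this is the honest thing to do and your integrability estimate $\int_A\abs{u(f)}dm\le n\,m(A)+\varepsilon/2$ is the right tool.

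The only wrinkle is your Case~2. If $m$ attains no value in $(0,\delta]$ then every non-null set has measure $>\delta$, and your finiteness argument (at most $\lfloor1/\delta\rfloor$ disjoint non-null sets, hence the measure algebra is finite) is correct. But then the infimum $\delta$, if attained at some non-admissible $A_*$, gives $m(A_*)=\delta\in(0,\delta]$, contradicting the case hypothesis; so as written your Case~2 is self-refuting rather than constructive. The clean fix is immediate: in Case~2 the range of $m$ is finite, so the minimum positive value $\mu_0=\min\{m(A):m(A)>0\}>\delta$ exists; take $e=\mix yAx$ with $m(A)=\mu_0$, and then $V(\mix yBx)<V(e)$ forces $m(B)=0$, which makes $B$ trivially admissible. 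Either way you land on an act $e$ with $0<V(e)$ that does the job, so your proposal is correct once this small rephrasing is made.
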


\section{Coherence and arbitrage.}
\label{sec arbitrage}

In Theorem \ref{th GEU} the utility function $V$ was taken
as given although coherence need not be preserved when
considering a different representation f the same preference 
system. We want now to consider the more general problem 
of whether, among the many equivalent representations of 
the same preference system, there exists a utility function 
coherent with $u$. This problem has a surprisingly natural 
translation into the language of asset pricing and, in particular, 
into the notion of an arbitrage opportunity (or lack of). 

Let us start defining the set
\begin{equation}
\label{Z}
\Zu
	=
\big\{\zeta\in\Fun[0]\Act: 0\ge_u\zeta\big\},
\end{equation}	
defining $\mathcal P_+$ (resp. $\mathcal P_{++}$) as the 
convex cone spanned by the functions of the form
$\delta_b-\delta_a$ 
where $a,b\in\Act$ and $V(b)\ge V(a)$ (resp. $V(b)>V(a)$)
and writing 
$\Ku	= \Zu-\mathcal P_+$
\footnote{
Notice that the definition of $\Ku$ is independent of the
utility function $V$.
}.

It is easily seen that $V$ and $u$ are simply coherent 
if and only if%
\footnote{
Condition \eqref{NA} is formally identical to the 
mathematical definition of absence of arbitrage 
opportunities in mathematical finance. Compare
with \cite{kreps}.
}
\begin{equation}
\label{NA}
\Ku\cap\mathcal P_{++}
	=
\emp.
\end{equation}
Coherence requires a more stringent condition than
\eqref{NA}, involving the topology on $\Fun[0]\Act$
generated by sets of the form 
\begin{equation}
\label{basis}
U_{H,a}
	=
\Big\{\eta\in\Fun[0]\Act:\sum_{f\in\Act}\eta(f)H(f)<a\Big\},
\qquad
a\in\R,\ 
H\in\Fun\Act.
\end{equation}

\begin{theorem}
\label{th NFLVR}
The utility function $V$ admits an equivalent representation
coherent with $u$ if and only if 
\begin{equation}
\label{NFLVR}
\cl{\mathcal K}_u\cap\mathcal P_{++}
	=
\emp.
\end{equation}
\end{theorem}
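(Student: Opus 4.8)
The plan is to read the statement as a ``no free lunch'' theorem and to prove it by Hahn--Banach separation inside the locally convex space $\Fun[0]\Act$ topologized by the sets \eqref{basis}. The key preliminary observation is that the topological dual of this space is precisely $\Fun\Act$, an element $H\in\Fun\Act$ acting by $\eta\mapsto\sum_{f\in\Act}\eta(f)H(f)$; thus a candidate utility $W\in\Fun\Act$ and a continuous linear functional on $\Fun[0]\Act$ are the same object, with $W(\delta_f)=W(f)$, and $W$ is nonpositive on $\Ku$ if and only if it is nonpositive on $\cl\Ku$.

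First I would translate ``equivalent representation coherent with $u$'' into functional-analytic terms. Using the rescaling device from the proof of Theorem \ref{th GEU} (replace $\zeta$ by $\zeta/(\norm\zeta\vee1)$ and split into positive and negative parts) together with Lemma \ref{lemma value}, a linear $W$ is coherent with $u$ exactly when $W\le0$ on $\Zu$; and $W$ represents the same preference order as $V$ exactly when $W\ge0$ on $\mathcal P_+$ and $W>0$ on $\mathcal P_{++}\setminus\{0\}$ (necessity being clear, sufficiency obtained by testing $W$ on the differences $\delta_b-\delta_a$ and $\delta_a-\delta_b$; note in passing that $W\ge0$ on $\mathcal P_+$ already forces $W$ to be constant on every $V$-level set, hence of the form $\psi\circ V$ with $\psi$ nondecreasing on $V(\Act)\subseteq\R$). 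Since $\Ku=\Zu-\mathcal P_+$, the two requirements together say exactly: $W\le0$ on $\cl\Ku$ and $W>0$ on $\mathcal P_{++}\setminus\{0\}$. Hence the theorem reduces to the geometric claim that the closed convex cone $\cl\Ku$ admits a continuous linear functional nonpositive on it and strictly positive on $\mathcal P_{++}$ if and only if $\cl\Ku\cap\mathcal P_{++}=\emp$ --- where $0\notin\mathcal P_{++}$ since every nonzero element of $\mathcal P_{++}$ has strictly positive $V$-value.

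Necessity of \eqref{NFLVR} is then immediate. For sufficiency, assume \eqref{NFLVR}; note $\mathcal P_{++}\ne\emp$ by \assref{structure}, so $\cl\Ku\ne\Fun[0]\Act$. For each $\kappa\in\mathcal P_{++}$, Hahn--Banach separation of the point $\kappa$ from the closed convex set $\cl\Ku$ gives $H_\kappa\in\Fun\Act$ with $\sup_{\cl\Ku}H_\kappa\le0<H_\kappa(\kappa)$, the supremum being $0$ because $\cl\Ku$ is a cone; automatically $H_\kappa\ge0$ on $\mathcal P_+$ since $-\mathcal P_+\subseteq\Ku$. Write $\mathcal C=\{H\in\Fun\Act:H\le0\text{ on }\cl\Ku\}$, a convex cone all of whose elements are $\ge0$ on $\mathcal P_{++}$; for $H_1,H_2\in\mathcal C$ and $\kappa\in\mathcal P_{++}$ one then has $(H_1+H_2)(\kappa)=0$ iff $H_1(\kappa)=H_2(\kappa)=0$, so the null sets $N(H)=\{\kappa\in\mathcal P_{++}:H(\kappa)=0\}$, $H\in\mathcal C$, form a family closed under finite intersections, and the separating functionals above give $\bigcap_{H\in\mathcal C}N(H)=\emp$. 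The desired $W$ is any $H\in\mathcal C$ with $N(H)=\emp$; read back as a map on acts it is the required representation, and one may finally pass to the canonical form of Theorem \ref{th GEU}.

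The heart of the matter --- the step I expect to be the genuine obstacle --- is to extract from this downward-directed, empty-intersection family $\{N(H)\}$ a single $H$ with $N(H)=\emp$, i.e. to carry out the Kreps--Yan aggregation of the pointwise separating functionals into one that is \emph{strictly} positive on all of $\mathcal P_{++}$; in a general locally convex space this can fail, so the structure must be exploited. The relevant features are that $\mathcal P_{++}$ is generated by the simple differences $\delta_b-\delta_a$ with $V(b)>V(a)$, that every $H\in\mathcal C$ factors as $\psi\circ V$ with $\psi$ nondecreasing, and that $V(\Act)\subseteq\R$ is separable; so it suffices to build one $\psi\in\mathcal C$ that is strictly increasing on $V(\Act)$. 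I would attempt this by exhausting $V(\Act)$ along a well-order, at each stage adding to the functional already constructed a separating functional for the cone restricted to the not-yet-separated part (which still satisfies the no-arbitrage condition) and keeping the running sum inside the convex cone $\mathcal C$. Equivalently one must rule out a pair $s<t$ in $V(\Act)$ on which every member of $\mathcal C$ is forced to be constant; by the bipolar theorem this is the separation $\cl{V_*(\cl\Ku)}\cap\mathcal Q_{++}=\emp$, with $\mathcal Q_{++}$ the image of $\mathcal P_{++}$ in $\Fun[0]{V(\Act)}$ --- the delicate point, where the dependence of $\Zu$ on the state-wise data $u$, and not merely on $V$, has to be handled carefully.
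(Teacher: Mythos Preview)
Your overall strategy --- Hahn--Banach separation in the dual pair $(\Fun[0]\Act,\Fun\Act)$ followed by a Kreps--Yan aggregation --- is exactly the paper's, and your reformulation of ``equivalent and coherent'' as $W\le0$ on $\cl\Ku$ together with $W>0$ on $\mathcal P_{++}\setminus\{0\}$ is correct. You also correctly isolate the aggregation as the only real difficulty and spot that separability of $V(\Act)\subset\R$ is the structural lever.

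Where your proposal falls short is the execution of that aggregation. The paper does not use a well-order or a bipolar argument; it (i) extracts from $\mathscr J=\{\IV(f)\setminus\IV(g):V(f)>V(g)\}$ a \emph{countable} subfamily $\mathscr J_0$ with the property that every $J\in\mathscr J$ contains some $J_n\in\mathscr J_0$ (using separability of $V(\Act)$ and the fact that $\mathscr J$ has only countably many singletons), (ii) separates each $\delta_{f_n}-\delta_{g_n}$ from $\cl\Ku$ by some $\phi_n$, normalized so that the induced $V_n$ satisfies $|V_n|\le1$ on an order interval $\IV(h_n)$ with $V(h_n)\nearrow\sup_{\Act}V$, and (iii) sets $\phi=\sum_n2^{-n}\phi_n$. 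Step (ii) is the point missing from your sketch: without controlling the size of each $\phi_n$ on an exhausting family of order intervals, the countable sum need not be real-valued on $\Act$, since the individual $\phi_n$ are not globally bounded. Your transfinite scheme would face the same convergence issue at limit stages.

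One further caution: your final ``equivalently'' is not right as stated. Ruling out a pair $s<t$ on which \emph{every} $H\in\mathcal C$ is constant is already accomplished by the pointwise separation; but that alone does not yield a single $H$ strictly increasing on all of $V(\Act)$ (take the family of step functions on $\R$: every pair is separated, yet no single step function separates all pairs). The passage from ``no universally bad pair'' to ``one universally good functional'' \emph{is} the Kreps--Yan step, and it requires the countable-base plus normalization argument above rather than a bipolar shortcut.
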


\begin{proof}
If the utility function $V$ on $\Act$ is coherent with $u$, 
then $V(\kappa)\le0$ for every $\kappa\in\mathcal K_u$ and, 
by \eqref{basis}, the same inequality extends to the closure 
$\cl{\mathcal K}_u$ so that \eqref{NFLVR} holds. Conversely, 
fix $f_0,g_0\in\Act$ such that $V(f_0)>V(g_0)$, let $\phi_0$ 
be a continuous linear functional on $\Theta$ such that
\begin{equation}
\phi_0(\delta_{f_0}-\delta_{g_0})
	>
0
	\ge
\sup_{\zeta\in\cl{\mathcal K}_u}\phi_0(\zeta)
\end{equation}
and define
\begin{equation}
\label{Vphi}
V_0(h)
	=
\phi_0(h),
\qquad
h\in\Act.
\end{equation}
Then, $V_0(f_0)>V_0(g_0)$ while, if $V(b)\ge V(a)$, 
the inclusion $\delta_a-\delta_b\in\mathcal K_u$ 
implies
$\phi_0(\delta_a-\delta_b)
	\le
0$ 
and so $V_0(a)\le V_0(b)$. We can then normalize 
$\phi_0$ so that $\abs{V_0}\le1$ on some arbitrary, 
non empty order interval $\IV(h_0)$. Form a sequence
$\seqn h$ in $\Act$ such that $V(h_n)$ increases to
$\sup_{f\in\Act}V(f)$.

Consider the family 
$\mathscr J
=
\{\IV(f)\setminus \IV(g):g,f\in\Act,\ V(f)>V(g)\}$. It admits
a countable sub collection
$\mathscr J_0
	=
\{\IV(f_n)\setminus \IV(g_n):n\in\N\}$ 
with the property that every $J\in\mathscr J$ admits some
$J_n\in\mathscr J_0$ such that $J_n\subset J$%
\footnote{
This is an easy consequence of the fact that $\mathscr J$
contains at most countably many singletons and that the
set $\{V(f):f\in\Act\}$ is separable.
}.

We can repeat the same step for each $n\in\N$, 
with $h_n$ in place of $h_0$ and replacing $g_0,f_0$
with $f_n,g_n$ such that 
$\IV(f_n)\setminus \IV(g_n)
	\in
\mathscr J_0$.
For each iteration we obtain a linear functional $\phi_n$ 
separating  $\delta_{f_n}-\delta_{g_n}$  from 
$\cl{\mathcal K}_u$ and such that the function 
$V_n$ defined as in \eqref{Vphi} satisfies $V_n(a)\le V_n(b)$ 
whenever $V(a)\le V(b)$ and $\abs{V_n}\le1$ on $\IV(h_n)$. 
Let 
$\phi
	=
\sum_n2^{-n}\phi_n$ 
and again define $V_\phi$ as in \eqref{Vphi}. For each 
$f\in\Act$ there exists $N$ such that 
$f\in \IV(h_N)$. Thus,
\begin{align*}
\abs{V_\phi(f)}
	\le
\sum_n2^{-n}\abs{V_n(f)}
	\le
1+\sum_{n\le N}2^{-n}\abs{V_n(f)}
	<
+\infty.
\end{align*}
Moreover, if $V(f)>V(g)$ and if 
$\IV(f_n)\setminus\IV(g)
	\subset
\IV(f)\setminus\IV(g)$ then
\begin{align*}
V_\phi(f)-V_\phi(g)
	=
\phi(\delta_f-\delta_g)
	\ge
2^{-n}\phi_n(\delta_{f_n}-\delta_{g_n})
	>
0.
\end{align*}
In addition, if $V(b)\ge V(a)$ then, as seen above,
$V_n(b)\ge V_n(a)$ for all $n\in\N$. Then, $V_\phi$ 
is a utility function. Moreover if 
$\theta,\eta\in\Theta$ and $\theta\ge_u\eta$ then 
$\eta-\theta\in\mathcal K_u$
so that $V_\phi(\theta)\ge V_\phi(\eta)$ which proves 
the claim.
\end{proof}

The technique used in Lemma \ref{th NFLVR} is known 
in mathematical finance as Yan Theorem (see \cite{yan} 
or \cite{AMAS_2007}) and it will be used again in the 
next sections. Notice that $V$ plays virtually no role in
the preceding result (save in proving the existence of the 
countable subset $\mathscr J_0$ mentioned in the proof)
and it may thus be restated exclusively in terms of preferences.

\section{Multiple priors and maxmin preferences.}
\label{sec meu}

We shall now start discussing some popular decision 
models in terms of deviations from coherence. In so 
doing it is useful to formulate some minimal 
restrictions in order to rule out pathological situations.  

\begin{definition}
\label{def minimal}
The utility functions $V$ and $u$ are compatible if
there exists a function $c\in\Fun{\Theta,\R_+}$ such that
\begin{equation}
\label{minimal}
c(t\theta)\le tc(\theta)
\quad
0\le t\le1
\qand
\theta\ge_u\eta
\qtext{implies}
V(\theta)+c(\theta)\ge V(\eta),
\qquad
\eta\in\Theta.
\end{equation}
\end{definition}

The quantity $c(\theta)$ describes the cost of complexity 
inherent in the gamble $\theta$. Definition \ref{def minimal}
suggests that deviations from coherence are due to
the need to adjust the value of a gamble by its cost,
which is assumed to be finite. 

\begin{definition}
\label{def relative coherence}
Let  $\hat\Theta\subset\Theta$. The utility functions 
$u$ and $V$ are $\hat\Theta$-coherent if they are 
compatible and if the intervening cost function 
$c$ vanishes on $\hat\Theta$.
\end{definition}

Intuitively, the gambles included in $\hat\Theta$ are
considered by the decision maker as particularly simple
so as to bear no cost of complexity. The most natural
candidate are the gambles involving, further to $y$,
only one non constant act. Their set is denoted by 
$\Theta_0$.

\begin{theorem}
\label{th MEU}
Assume \assref{structure} and \assref{A_cts}. The utility 
functions $V$ and $u$ are $\Theta_0$-coherent if and 
only if there exists
$\Lambda\subset\Prob{\A_u}$ such that
\begin{equation}
\label{MEU}
\Lu\subset\bigcup_{\lambda\in\Lambda}L^1(\lambda)
\qand
V(f)
	=
\inf_{\lambda\in\Lambda}\int_\Omega u(f)d\lambda
\qquad
f\in\Act.
\end{equation}
\end{theorem}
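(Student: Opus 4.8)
The plan is to derive the MEU representation \eqref{MEU} from $\Theta_0$-coherence by packaging the cost function $c$ into a separation/arbitrage argument, in the same spirit as Theorem \ref{th NFLVR}. First I would unwind what $\Theta_0$-coherence says concretely. A gamble in $\Theta_0$ has the form $\theta=t\delta_f+s\delta_y$ with $t+s\le1$, $t,s\ge0$; since $c$ vanishes on $\Theta_0$ and satisfies $c(t\theta)\le tc(\theta)$, compatibility forces the pointwise inequality $t\,u(f(\omega))+s\le u(\restr\eta\omega)$ for all $\omega$ to imply $t V(f)+s\le V(\eta)$. Running $\eta$ over elementary gambles $\delta_g$ and using \assref{structure} to build truncations, this should yield: for every $f\in\Act$ and every $g\in\Act$, if $u(g)\ge u(f)^{(k)}$ pointwise on the relevant set then $V(g)\ge V(f(k))$ — i.e. a simple-coherence-type statement at the level of single acts against single acts, plus control of the complexity cost being zero on the ``one non-constant act'' gambles. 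The upshot I want to extract is that $V$ is \emph{superadditive} along the cone generated by $\Theta_0$ while still dominated, act-by-act, by the conditional-expectation bound.

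Next I would set up the convex-analytic core of the argument. Define, for $f\in\Act$, the quantity $W(f)=\inf\{\int u(f)\,d\lambda:\lambda\in\Prob{\A_u},\ u(f)\in L^1(\lambda)\}$ over a suitable candidate set of priors, and show $V(f)=W(f)$. The natural way is to show (i) $V\le W$ is automatic once we know each admissible $\lambda$ is ``consistent'' with $V$ in the sense $V(f)\le\int u(f)d\lambda$, and (ii) for each fixed $f$ there exists $\lambda_f$ attaining (or approximating to within $\varepsilon$) the value $V(f)$. Direction (ii) is where I would invoke a Hahn–Banach/Yan-type separation: consider on $\Lu$ (or on $\Fun[0]\Act$) the sublinear functional $p(h)=\sup\{V(\theta):\theta\in\Theta_0,\ u(\restr\theta\cdot)\le h\}$ built from the $\Theta_0$-gambles, check that $\Theta_0$-coherence makes $V$ sit below $p$ and above the relevant lower envelope, and separate the point $u(f)$ from the cone $\{h:p(h)<V(f)\}$ to produce a positive linear functional $\ell_f$ with $\ell_f(\mathbf 1)=1$, hence a probability $\lambda_f\in\Prob{\A_u}$, satisfying $\int u(g)d\lambda_f\ge V(g)$ for all $g$ and $\int u(f)d\lambda_f = V(f)$. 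The continuity axiom \assref{A_cts} enters exactly as in Corollary \ref{cor GEU}: it kills the bubble term $\Phi$ that would otherwise appear in each ``local'' SEU representation, so that the linear functional $\ell_f$ integrates $u(f)$ properly and $\int u(f)d\lambda_f=\lim_k V(f(k))=V(f)$ rather than $V(f)+\Phi(u(f))$. Collect $\Lambda=\{\lambda_f:f\in\Act\}$ (or its closure); then $V(f)=\inf_{\lambda\in\Lambda}\int u(f)d\lambda$, and $\Lu\subset\bigcup_{\lambda\in\Lambda}L^1(\lambda)$ follows because $\lambda_f$ was constructed so that $u(f)\in L^1(\lambda_f)$, which covers all of $\{u(f):f\in\Act\}$ and hence spans $\Lu$. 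For the converse, if \eqref{MEU} holds then for $\theta\ge_u\eta$ we get $\int u(\restr\theta\cdot)d\lambda\ge\int u(\restr\eta\cdot)d\lambda$ for each $\lambda$, so taking infima and using that on a $\Theta_0$-gamble $t\delta_f+s\delta_y$ the value is $tV(f)+s=\inf_\lambda\int u(\restr\theta\cdot)d\lambda$ linearly, one checks $c\equiv$ (the gap between $\inf$ and the value on general $\Theta$) is a legitimate cost function vanishing on $\Theta_0$.

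The main obstacle I anticipate is the uniformity/measurability in direction (ii): producing a \emph{single} family $\Lambda$ that simultaneously (a) is contained in $\Prob{\A_u}$, (b) has union of $L^1$-domains covering $\Lu$, and (c) realizes $V$ as a pointwise infimum — as opposed to merely constructing, for each $f$, a prior that works for that $f$. The per-$f$ separation is routine Hahn–Banach; the delicate point is that the separating functional must be made $\A_u$-based and integrability-respecting, and here one needs to lean on \assref{structure} (so that $\A_u\supset\Ru$ and truncations live in $\Act$) together with \assref{A_cts} to guarantee that the separating functional does not charge the ``bounded part at infinity,'' exactly the mechanism by which $\Phi$ was forced to vanish in Corollary \ref{cor GEU}. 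A secondary subtlety is verifying that the infimum over $\Lambda$ is genuinely attained or at least that no spurious strict inequality $V(f)<\inf_\lambda\int u(f)d\lambda$ survives; this follows because $\lambda_f\in\Lambda$ itself gives $\int u(f)d\lambda_f=V(f)$, but one must be careful that enlarging $\Lambda$ by closure or convexification (needed for (b)) does not destroy this, which is handled by noting every added prior still satisfies $\int u(g)d\lambda\ge V(g)$ for all $g$ by a limiting argument through \eqref{basis}-type neighbourhoods.
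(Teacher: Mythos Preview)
Your overall architecture---produce, for each $f\in\Act$, a prior $\lambda_f$ by a separation argument, set $\Lambda=\{\lambda_f:f\in\Act\}$, and use \assref{A_cts} together with truncations to close the sandwich---is indeed the paper's. Two concrete steps in your plan, however, do not go through as written.

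First, the functional $p(h)=\sup\{V(\theta):\theta\in\Theta_0,\ u(\theta)\le h\}$ is neither sublinear nor concave, because $\Theta_0$ is \emph{not convex}: a convex combination of $a\delta_f+b\delta_y$ and $a'\delta_g+b'\delta_y$ with $f\ne g$ involves two non-constant acts and lies outside $\Theta_0$. Without concavity you cannot run Hahn--Banach, and the set $\{h:p(h)<V(f)\}$ need not be convex, so your separation has no footing. The paper repairs this by taking the supremum over \emph{all} of $\Theta$, defining
\[
\kappa(\xi)=\sup\big\{\lambda V(\eta):\lambda>0,\ \eta\in\Theta,\ \xi\ge_u\lambda\eta\big\},
\]
which is superadditive precisely because $\Theta$ is convex. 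Finiteness of $\kappa$ is then the place where compatibility (the cost function $c$) is genuinely used; your plan never invokes $c$ in the forward direction, and $\Theta_0$-coherence alone is not enough. The role of $\Theta_0$-coherence is solely to guarantee $\kappa=V$ on $\Theta_0$ (hence on acts), via the implication $\theta\in\Theta_0,\ \theta\ge_u\eta\Rightarrow V(\theta)\ge V(\eta)$; note this is the opposite direction to the inequality you state in your first paragraph, where you put the $\Theta_0$-gamble below.

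Second, even with a good superadditive $\kappa$ in hand, you still need, for the fixed act $f$, a \emph{concave} functional that touches $V$ exactly at $\pm f$ in order to extract $\lambda_f$ with the desired equality. The paper achieves this with the translate $\kappa_f(\beta)=\kappa(\beta+f)-\kappa(f)$, which is concave and satisfies $\kappa_f(f)=-\kappa_f(-f)=V(f)$; the bound $\kappa_f\le s_u$ on $\Xi_f^*$ then yields $\lambda_f\in\Prob{\A_u}$ with $\kappa_f(\beta)\le\int u(\beta)\,d\lambda_f$. Your sketch jumps directly to ``$\int u(f)\,d\lambda_f=V(f)$'' without any mechanism forcing equality, and for unbounded $u(f)$ this is exactly where the argument is delicate: the construction only delivers $V(f^k)\le\inf_\lambda\int u(f^k)\,d\lambda$ and $V(f_k)\ge\inf_\lambda\int u(f)\,d\lambda$, and it is \assref{A_cts} (through $V(f)=\lim_kV(f^k)=\lim_kV(f_k)$) that squeezes these together. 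Your converse is essentially the paper's and is fine.
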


In \eqref{MEU} whenever $u(f)\notin L^1(\lambda)$ we
define conventionally $\int u(f)d\lambda=+\infty$, as
customary.

\begin{proof}
Recalling \eqref{Z}, define the convex cone 
$\Xi
	=
\bigcup_{\lambda>0}\lambda\Theta-\Zu$
and the extended real-valued functional
\begin{equation}
\kappa(\xi)
	=
\sup\big\{\lambda V(\eta):
\lambda>0,\ \eta\in\Theta,\ \xi\ge_u\lambda\eta\big\},
\qquad
\xi\in\Xi,
\end{equation}
which is clearly positively homogeneous. If $\xi\ge_u\lambda\eta$ 
and $\bar\xi=\xi/(\norm\xi\vee1)$, then
\begin{equation*}
\frac{\bar\xi^+}{1+\lambda}
\ge_u
\frac{\lambda}{(1+\lambda)(\norm\xi\vee1)}\eta
+
\frac{1}{1+\lambda}\bar\xi^-
\equiv
\eta_1
\end{equation*}
and $\eta_1,\bar\xi^+\in\Theta$. If $u$ and $V$ are 
compatible, there exists a cost function $c$
such that
\begin{align*}
\frac{\lambda}{(1+\lambda)(\norm\xi\vee1)}V(\eta)
+
\frac{1}{1+\lambda}V(\bar\xi^-)
=
V(\eta_1)
\le
\frac{V(\bar\xi^+)+c(\bar\xi^+)}{1+\lambda}.
\end{align*}
This in turn implies
\begin{align}
\kappa(\xi)
	\le
V(\xi)+c(\bar\xi^+)(1\vee\norm\xi)
	<
+\infty.
\end{align}
Therefore $\kappa$ is real valued, superadditive and 
monotonic with respect to $\ge_u$; moreover $\kappa\ge V$ 
on $\Theta$, while $\kappa=V$ on $\Theta_0$.

For fixed $f\in\Act$, define the $f$ translate $\Xi_f= \Xi-\{f\}$
and the functional
\begin{equation}
\label{xi}
\kappa_f(\beta)
	=
\kappa(\beta+f)-\kappa(f)
\qquad
\beta\in\Xi_f.
\end{equation}
$\Xi_f$ is a convex set containing $\Xi$ while $\kappa_f$ is 
\tiref i
concave (and thus such that $\kappa_f\ge\kappa$ on $\Xi$), 
\tiref{ii}
$\ge_u$ monotonic, 
\tiref{iii} 
such that $-\kappa_f(-f)
	=
V(f)
	=
\kappa_f(f)$
and
\tiref{iv}
$\kappa_f(t\delta_y)=t$
for any $t\ge0$.

Implicit in these properties is the fact that if $\beta\in\Xi_f$ 
is such that 
$s_u(\beta)
	\equiv
\sup_\omega u(\restr\beta\omega)
	<
+\infty$,
then
$\kappa_f(\beta)\le s_u(\beta)$. This conclusion follows from
\tiref{ii} and either \tiref{iv}, if $s_u(\beta)\ge0$, or else \tiref i,
via the inequalities
\begin{align*}
0
\ge
\kappa_f\Big(\frac{\beta-s_u(\beta)\delta_y}{1-s_u(\beta)}\Big)
\ge
\frac{\kappa_f(\beta)}{1-s_u(\beta)}
+
\frac{-s_u(\beta)}{1-s_u(\beta)}.
\end{align*}

Then, if $\beta=\sum_{i=1}^nw_i\beta_i$ is 
a convex combination of elements from 
$\Xi_f^*=\{\beta\in\Xi_f:s_u(\beta)<+\infty\}$
we deduce
\begin{align*}
\sum_{i=1}^nw_i\kappa_f(\beta_i)
	\le
\kappa_f(\beta)
	\le
s_u(\beta).
\end{align*}
As a consequence of \cite[Theorem 4.1]{JCA_2023} there
exists $\lambda_f\in\Prob{\A_u}$ such that 
\begin{equation}
u(\beta)\in L^1(\lambda_f)
\qand
\kappa_f(\beta)
	\le
\int u(\beta)d\lambda_f,
\qquad
\beta\in\Xi^*_f.
\end{equation}
Let $\Lambda=\{\lambda_f:f\in\Act\}$.

The inclusions $f^k\in\bigcap_{h\in\Act}\Xi^*_h$ and
$-f_k\in\bigcap_{n\le k}\Xi^*_{f_n}$ lead to the
conclusion 
\begin{equation}
u(f^k)
\in
\bigcap_{\lambda\in\Lambda}L^1(\lambda)
\qand
-u(f_k)
\in
\bigcap_{n\le k}L^1(\lambda_{f_n})
\end{equation}
so that $u(f)\in\bigcap_n L^1(\lambda_{f_n})$ and thus
$\Lu
	\subset
\bigcup_{\lambda\in\Lambda}L^1(\lambda)$.
Moreover,
\begin{align}
\label{V(f^k)}
V(f^k)
	=
\inf_{h\in\Act}\kappa_h(f^k)
	\le
\inf_{\lambda\in\Lambda}\int u(f^k)d\lambda
	\le
\inf_{\lambda\in\Lambda}\int u(f)d\lambda
	<
+\infty.
\end{align}
On the other hand,
\begin{align}
\label{V(f_k)}
V(f_k)
	=
-\kappa_{f_k}(-f_k)
	\ge
\int u(f_k)d\lambda_{f_k}
	\ge
\inf_{\lambda\in\Lambda}\int u(f)d\lambda.
\end{align}
The representation then follows from \assref{A_cts} through
the implication $V(f)=\lim_kV(f^k)=\lim_kV(f_k)$.


Conversely, let $\Lambda$ satisfy \eqref{MEU} and 
choose $\theta,\eta\in\Theta$ such that $\theta\ge_u\eta$. 
Then,
\begin{align*}
V(\eta)
	=
\sum_{h\in\Act}\eta(h)\inf_{\lambda\in\Lambda}
\int u(h)d\lambda
	\le
\inf_{\lambda\in\Lambda}\int u(\eta)d\lambda
	\le
\inf_{\lambda\in\Lambda}\int u(\theta)d\lambda
	=
V(\theta) +c(\theta)
\end{align*}
where we have implicitly set
$c(\theta)
	=
\inf_{\lambda\in\Lambda}\int u(\theta)d\lambda
-
V(\theta)
	\ge
0$. If, in addition, $\theta\in\Theta_0$ so that
$\theta=a\delta_f+b\delta_y$ for some $a,b\in\R_+$,
then $V(\theta)
=
\inf_{\lambda\in\Lambda}\int u(\theta)d\lambda$
so that $c(\theta)=0$.
\end{proof}

A version of the preceding result may be established without
invoking \assref{A_cts}.

\begin{corollary}
\label{cor MEU}
Assume \assref{structure}. If the utility functions $V$ and $u$ 
are $\Theta_0$-coherent then there exist 
(a)
a concave functional $\Phi\in\Fun{\Lu}$ which 
vanishes on bounded functions and 
(b)
a convex set $\Lambda\subset\Prob{\A_u}$ such that 
\begin{equation}
\label{MMEU}
\Lu\subset\bigcup_{\lambda\in\Lambda}L^1(\lambda)
\qand
V(f)
	=
\Phi(u(f))
+
\inf_{\lambda\in\Lambda}\int_\Omega u(f)d\lambda
\qquad
f\in\Act.
\end{equation}
\end{corollary}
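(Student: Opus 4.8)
The plan is to mimic the structure of the proof of Theorem~\ref{th MEU}, but to stop short of applying \assref{A_cts} to pass from the truncations $f^k, f_k$ to $f$ itself; this is precisely where the preference bubble (the functional $\Phi$) will be absorbed. First I would reconstruct the superadditive, $\ge_u$-monotonic functional $\kappa$ on the cone $\Xi = \bigcup_{\lambda>0}\lambda\Theta - \Zu$ exactly as before, using only compatibility and $\Theta_0$-coherence; this gives $\kappa \ge V$ on $\Theta$, $\kappa = V$ on $\Theta_0$, and $\kappa$ real-valued. Then, for each $f\in\Act$, I would form the translate $\kappa_f$ on $\Xi_f = \Xi - \{f\}$ and record properties \tiref i--\tiref{iv} together with the bound $\kappa_f(\beta)\le s_u(\beta)$ whenever $s_u(\beta)<+\infty$, and invoke \cite[Theorem 4.1]{JCA_2023} to obtain $\lambda_f\in\Prob{\A_u}$ with $u(\beta)\in L^1(\lambda_f)$ and $\kappa_f(\beta)\le\int u(\beta)d\lambda_f$ on $\Xi_f^*$. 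Setting $\Lambda_0 = \{\lambda_f : f\in\Act\}$ and then $\Lambda = \core$ or simply the convex hull, the inclusion $\Lu\subset\bigcup_{\lambda\in\Lambda}L^1(\lambda)$ follows verbatim from the argument with $f^k$ and $-f_k$ in Theorem~\ref{th MEU}.

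The new point is to identify $\Phi$. From \eqref{V(f^k)} and \eqref{V(f_k)} I have $V(f^k)\le\inf_{\lambda\in\Lambda}\int u(f^k)d\lambda$ and $V(f_k)\ge\inf_{\lambda\in\Lambda}\int u(f_k)d\lambda$; since the two sides sandwich $\inf_\lambda\int u(f)d\lambda$ in the limit, the monotone limits $\lim_k V(f^k)$ and $\lim_k V(f_k)$ both exist (monotonicity in $k$ coming from $\ge_u$-monotonicity of $V$ restricted to $\Act$, which is simple coherence, itself implied by $\Theta_0$-coherence). I would then \emph{define}
\begin{equation*}
\Phi(u(f)) = V(f) - \inf_{\lambda\in\Lambda}\int_\Omega u(f)\,d\lambda,
\end{equation*}
and the task is to show $\Phi$ is well-defined on $\Lu$ (depends only on $u(f)$, not on $f$), concave, positively homogeneous, and vanishes on bounded functions. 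Vanishing on bounded functions is immediate because for bounded $u(f)$ one has $f = f(k)$ for large $k$, so $V(f) = V(f^k) = \int u(f)d\lambda_f = \inf_\lambda\int u(f)d\lambda$; positive homogeneity and the fact that $\Phi$ depends only on $u(f)$ follow from the corresponding properties of $V$ and of $\inf_\lambda\int(\cdot)d\lambda$ on the relevant mixtures, exploiting \assref{structure} to realize sums and scalings of the $u(f)$ inside $\Act$ up to the truncations. Concavity is the delicate part: I would derive it from superadditivity and positive homogeneity of $\kappa$ together with the fact that $\inf_\lambda\int(\cdot)d\lambda$ is itself concave, so that $V = \Phi + \inf_\lambda\int(\cdot)d\lambda$ forces $\Phi$ to inherit concavity from $\kappa$ on the cone generated by the $u(f)$.

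The main obstacle, as in the SEU case (Theorem~\ref{th GEU}), is the bookkeeping needed to show that the expression for $\Phi$ genuinely factors through $\Lu$ and that its algebraic properties (concavity, homogeneity, vanishing on bounded functions) transfer cleanly despite $\Act$ being only a small domain — one cannot freely take mixtures of arbitrary acts, only those permitted by \assref{structure}, so every manipulation must be routed through the truncations $f^k$, $f_k$, $f(k)$ and the algebra $\A_u$. Once $\Phi$ is in hand, the representation \eqref{MMEU} is just the definition, and one finally replaces $\Lambda$ by its closed convex hull, noting that $\inf_\lambda\int u(f)d\lambda$ is unchanged by convexification, to secure the asserted convexity of $\Lambda$.
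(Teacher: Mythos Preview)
Your approach coincides with the paper's: you define $\Phi(u(f))=V(f)-\inf_{\lambda\in\Lambda}\int u(f)\,d\lambda$ and read off its properties from the sandwich inequalities \eqref{V(f^k)}, \eqref{V(f_k)} of Theorem~\ref{th MEU}. The paper's proof is in fact much terser than yours: it simply declares $\Phi$ by \eqref{MMEU}, records $V(f)-V(f^k)\ge\Phi(u(f))\ge V(f)-V(f_k)$, and concludes that $\Phi$ vanishes on bounded elements of $\Lu$. It does not attempt to verify well-definedness on all of $\Lu$ nor concavity.

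Your extra effort on concavity is therefore beyond what the paper actually carries out, and your worry that it is ``the delicate part'' is well placed: the paper itself remarks, immediately after the corollary, that ``it was not possible to prove that $\Phi$ is concave'', which is why the converse implication is dropped relative to Theorem~\ref{th MEU}. Your proposed route to concavity is also suspect as stated: $V$ extends linearly to gambles and $\inf_{\lambda}\int(\cdot)\,d\lambda$ is concave, so their difference is a priori \emph{convex}, not concave; the invocation of $\kappa$ does not repair this, since $\kappa$ controls $V$ from above on $\Theta$, not from below in a way that would transfer concavity to $\Phi$. If you want to match the paper, simply drop the concavity argument and retain the vanishing-on-bounded step; if you want to go beyond the paper, a genuinely new idea is needed there.
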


\begin{proof}
We start noting that, by coherence, \eqref{MMEU} may be
taken as a definition of $\Phi$. Retaining from the proof
of Theorem \ref{th MEU} the inequalities \eqref{V(f^k)}
and \eqref{V(f_k)}, we conclude that
\begin{align}
V(f)-V(f^k)
	\ge
\Phi(u(f))
	\ge 
V(f)-V(f_k)
\end{align}
so that indeed $\Phi$ vanishes on bounded elements of $\Lu$
\end{proof}

As we drop \assref{A_cts} the representation \eqref{MMEU} 
ceases to be sufficient for $\Theta_0$-coherence as it was 
not possible to prove that $\Phi$ is concave. This confirms the 
importance even in this new setting of the assumptions inducing
boundedness.

\section{Choquet expected utility.}
\label{sec ceu}

In a gamble
$\theta\in\Theta_0$ there is essentially only one source
of randomness. One may extend this principle and consider
positions in several different acts which are ``similar'' to
one another in that the utility of the consequences they 
produce are affected by randomness in quite the same way. 
A popular criterion is comonotonicity and it is obvious that 
all gambles in $\Theta_0$ are formed by comonotonic acts. 
By $\Theta_1$ we thus designate the set of gambles $\theta$ 
such that any two acts, $f,g$, in the support of $\theta$ are 
comonotonic in the sense that
\begin{equation}
\label{comonotonic}
\big[u(f(\omega))-u(f(\omega'))\big]
\big[u(g(\omega))-u(g(\omega'))\big]
	\ge
0,
\qquad
\omega,\omega'\in\Omega.
\end{equation}

\begin{theorem}
\label{th CEU}
Assume \assref{structure} and 
\assref{A_cts}. Then $V$ and $u$ are $\Theta_1$-coherent 
if and only if $\gv$ is a convex capacity satisfying the 
Choquet expected utility representation
\begin{equation}
\label{CEU}
V(f)
	=
\int u(f)d\gv,
\qquad
f\in\Act.
\end{equation}
\end{theorem}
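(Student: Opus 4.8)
The plan is to mirror the structure of the proof of Theorem~\ref{th MEU}, replacing the cone $\Xi$ built from arbitrary gambles in $\Theta$ by the smaller cone built from comonotonic ones, and to exploit the fact that the relevant supremum functional, when restricted to a maximal comonotonic class, is additive rather than merely superadditive. More precisely, I would fix a ``direction'' — i.e.\ for each act $f$ the set of acts comonotonic with $f$ — and on this lattice-like family the functional $\kappa$ from the proof of Theorem~\ref{th MEU} becomes \emph{linear}, so the separating measure $\lambda_f$ obtained via \cite[Theorem 4.1]{JCA_2023} actually represents $V$ exactly on that comonotonic class, not just from above. Running this through truncations $f^k,f_k,f(k)$ and \assref{A_cts} as in \eqref{V(f^k)}--\eqref{V(f_k)} would give $V(f)=\int u(f)d\lambda_f$ for a measure depending on the comonotonic class of $f$, and the standard argument (Schmeidler) that a functional which is positively homogeneous, monotone, and comonotone-additive is a Choquet integral against the capacity $A\mapsto V(\mix yAx)$, here $\gv$, would finish the ``only if'' direction; convexity of $\gv$ comes from superadditivity of $V$ on $\Theta$, which persists because $\Theta_0\subset\Theta_1$.

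For the concrete steps, first I would record that every $\theta\in\Theta_1$ has $u(\restr\theta\omega)$ comonotone across the support, so that the analogue of $\kappa$,
\begin{equation*}
\kappa(\xi)=\sup\{\lambda V(\eta):\lambda>0,\ \eta\in\Theta_1,\ \xi\ge_u\lambda\eta\},
\end{equation*}
is finite (same compatibility estimate as before, since the cost function vanishes on $\Theta_0$ which dominates the homogeneity bound), superadditive, $\ge_u$-monotone, and agrees with $V$ on $\Theta_1$. Second, I would show that on a fixed maximal comonotonic cone $\kappa$ is in fact additive: given $\xi,\xi'$ comonotonic, any $\eta$ with $\xi+\xi'\ge_u\lambda\eta$ can be split, because the comonotone decomposition $u(\restr{(\xi+\xi')}\omega)=u(\restr\xi\omega)+u(\restr{\xi'}\omega)$ lets one match payoffs state by state; this is the point where comonotonicity does real work. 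Third, with additivity in hand I would invoke \cite[Theorem 4.1]{JCA_2023} (or directly a Hahn--Banach/Yan-type argument as in Theorem~\ref{th NFLVR}) to get $\lambda_f\in\Prob{\A_u}$ with $u(\beta)\in L^1(\lambda_f)$ and $\kappa_f(\beta)=\int u(\beta)d\lambda_f$ for all $\beta$ in the comonotone cone with finite $s_u$, then pass to truncations and use \assref{A_cts} exactly as in \eqref{V(f^k)},\eqref{V(f_k)},\eqref{m} to obtain $V(f)=\int u(f)d\lambda_f$ and $\Lu\subset L^1(\lambda_f)$. Fourth, since the Choquet integral is the unique comonotone-additive, positively homogeneous, monotone extension of $A\mapsto V(\mix yAx)=\gv(A)$ (after the normalization \eqref{normalize}), I would conclude \eqref{CEU}; superadditivity of $V$ on $\Theta_1\supset\Theta_0$ then forces $\int(u(f)\vee u(g))d\gv+\int(u(f)\wedge u(g))d\gv\ge V(f)+V(g)$, i.e.\ $\gv$ convex.

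For the converse, assuming $\gv$ is a convex capacity with \eqref{CEU}, I would use the classical fact that a Choquet integral against a convex capacity is superadditive and, restricted to a comonotonic family, additive; hence for $\theta\ge_u\eta$ one gets $V(\eta)\le\int u(\eta)d\gv\le\int u(\theta)d\gv=:V(\theta)+c(\theta)$ with $c(\theta)=\int u(\theta)d\gv-V(\theta)\ge0$ by superadditivity, and for $\theta\in\Theta_1$ the comonotone additivity gives $\int u(\theta)d\gv=\sum_f\theta(f)\int u(f)d\gv=V(\theta)$, so $c$ vanishes on $\Theta_1$; checking $c(t\theta)\le tc(\theta)$ for $0\le t\le1$ is then immediate from positive homogeneity of the Choquet integral together with superadditivity. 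The main obstacle I anticipate is the second step — proving that $\kappa$ is genuinely \emph{additive} (not just superadditive) on a maximal comonotonic cone in the presence of the $\Zu$-shifts and of possibly unbounded $u(f)$; one must be careful that the sup defining $\kappa$ is attained ``in the same direction'', which is exactly where comonotonicity of the competing $\eta$'s has to be leveraged, and where the truncation bookkeeping (ensuring the representing measure does not depend on $k$) is most delicate.
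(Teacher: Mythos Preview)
Your strategy is a reasonable functional-analytic attack, but the paper does something far more elementary and avoids the very obstacle you flag. Instead of building $\kappa$ and proving additivity on comonotonic cones, the paper constructs, for each $f$ and each $k,n$, the explicit comonotonic family $h_i=\mix y{\{u(f)\ge i2^{-n}-k\}}x$, $i=1,\dots,k2^{n+1}-1$, and the gamble $\eta_n^k$ placing weight $2^{-n}$ on each $h_i$. One then has the sandwich
\[
\frac{2^{-n}\delta_y+\eta_n^k}{2k}\ \ge_u\ \frac{f(k)+k\delta_y}{2k}\ \ge_u\ \frac{\eta_n^k}{2k},
\]
all three terms lying in $\Theta_1$, so $\Theta_1$-coherence gives $V(f(k))\ge V(\eta_n^k)-k\ge V(f(k))-2^{-n}$. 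A direct computation shows $\lim_n V(\eta_n^k)=\int_{-k}^k u(f)\,d\gv+k$, and \assref{A_cts} passes $k\to\infty$. Convexity is obtained by applying $\Theta_1$-coherence to the single pair of gambles betting $\tfrac12$ on $\{\mix y{A\cap B}x,\mix y{A\cup B}x\}$ (comonotonic) versus $\{\mix yAx,\mix yBx\}$; no general $u(f)\vee u(g)$ is needed, which matters because under the small-domain assumption the set $\{u(f)>u(g)\}$ need not lie in $\Ru$. The converse is handled exactly as you describe.

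What your route buys is conceptual unity with Theorem~\ref{th MEU}; what it costs is precisely the step you identify as delicate. The claim that $\kappa$ is \emph{additive} on a maximal comonotonic cone is not established by the splitting sketch you give: if $\xi\ge_u\lambda_1\eta_1$ and $\xi'\ge_u\lambda_2\eta_2$ with $\eta_1,\eta_2\in\Theta_1$, there is no reason the supports of $\eta_1$ and $\eta_2$ are mutually comonotonic, so their combination need not lie in $\Theta_1$; and conversely, decomposing a single $\eta$ dominated by $\xi+\xi'$ into pieces dominated by $\xi$ and $\xi'$ separately is not automatic in the presence of the $\Zu$-shifts. This can likely be repaired, but the paper's Riemann-sum construction sidesteps the issue entirely by working only with the canonical comonotonic family of level-set indicators, for which the relevant identities are arithmetic rather than functional-analytic.
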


\begin{proof}
The proof is elementary and relies on a stepwise
approximation of $u(f)$. For each $k,n\in\N$ define
\begin{equation}
h_i
	=
\mix y{\{u(f)\ge i2^{-n}-k\}}x,
\qquad
i=1,\ldots,k2^{n+1}-1
\end{equation} 
and let $\eta_n^k(h_i)=2^{-n}$ or else $\eta_n^k(g)=0$.
Then $\norm{\eta_n^k}=2k-2^{-n}$ and
\begin{align*}
2^{-n}u(y)+\sum_{g\in\Act}\eta_n^k(g)u(g)
	\ge
u(f(k))+ku(y)
	\ge
\sum_{g\in\Act}\eta_n^k(g)u(g).
\end{align*}
Then,
\begin{equation}
\label{sandwich}
\frac{2^{-n}\delta_y+\eta^k_n}{2k}
	\ge_u
\frac{f(k)+k\delta_y}{2k}
	\ge_u
\frac{\eta^k_n}{2k}.
\end{equation}
Given that all three terms in \eqref{sandwich} are elements 
of $\Theta_1$ we conclude
$V(f(k))
	\ge 
V(\eta_k^n)-k
	\ge 
V(f(k))-2^{-n}$.
In addition,
\begin{align*}
\lim_nV(\eta_n^k)
	&=
\lim_n\sum_{i=1}^{k2^{n+1}-1}2^{-n}\gv\big(u(f)\ge i2^{-n}-k\big)
	\\&=
\int_0^k\gv(u(f)\ge t-k)dt
	\\&=
\int_{-k}^k\gv(u(f)\ge t)dt
	\\&=
\int_{-k}^k u(f)d\gv+k.
\end{align*}
From this we deduce \eqref{CEU} by letting $k\to+\infty$ 
and exploiting \assref{A_cts}. 

Let $A,B\in\A_u$. Since $\mix y{\cap B}x$ and $\mix y{A\cup B}x$ 
are comonotonic, the gamble $\theta$ which bets $1/2$ on each 
of these two acts is in $\Theta_1$. Let $\eta$ be the gamble which 
bets $1/2$ on $\mix yAx$ and $\mix yBx$. Given $u(\theta)=u(\eta)$ 
we conclude by $\Theta_1$-coherence
\begin{align*}
\gv(A\cup B)+\gv(A\cap B)
	=
2V(\theta)
	\ge 
2V(\eta)
	=
\gv(A)+\gv(B)
\end{align*}
proving convexity

If, conversely, the Choquet representation \eqref{CEU} 
holds, $\theta\ge_u\eta$ and $\theta\in\Theta_1$ then
by ordinary rules of the Choquet integral,
$V(\theta)
	=
\int u(\theta)d\gv
	\ge
\int u(\eta)d\gv
	\ge
V(\eta)$.
\end{proof}

The importance of Choquet expected utility formula
\eqref{CEU}, introduced in the theory of decisions by 
Schmeidler \cite{schmeidler_89}, justifies some historical 
comments. In his seminal contribution Schmeidler 
\cite{schmeidler} rightly attributes a special importance 
to convex capacities in view of two facts:
\tiref{i}
these are the infimum of the elements of their core and 
\tiref{ii}	
the corresponding Choquet integral is superadditive and 
additive over comonotonic functions. Following Schmeidler, 
the credit for discovering these two important properties 
has thenceforth unanimously been ascribed to Shapley 
\cite{shapley_71} (the paper was originally published in 
1965) and to Dellacherie \cite{dellacherie}, respectively. 
As a matter of fact (and quite curiously) both results were 
already well known  at the time the above references were 
published since they had been proved, and in more general 
terms, by Eisenstadt and Lorentz \cite{eisenstadt_lorentz} 
in 1959 (see Theorem 2 in that paper which fully anticipates
the main result of \cite{schmeidler}). Even the term 
{\it convex} (or rather concave), attributed to Shapley,
had already been introduced by Eisenstadt and Lorentz.

\section{The structure of subjective capacities.}
\label{sec capacity}

In the preceding sections we have repeatedly used 
the capacity $\gv$ defined in \eqref{capacity}, which, 
depending on the degree of coherence assumed, may 
or not be additive or convex. A different utility 
representation of the same preference system will in 
general induce a capacity with different properties.
The following result provides another illustration of
the connection between decision theory and finance.
Notice that a weaker property than \assref{structure}
is actually needed as we only use mixing of $x$ and $y$.

\begin{lemma}
\label{lemma submeasure}
Under \assref{structure} there
exists a utility function $V_*$ on $\Act$ which is equivalent 
to $V$ and is associated with a subadditive capacity.
\end{lemma}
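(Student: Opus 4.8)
The plan is to mimic the arbitrage/separation argument of Theorem \ref{th NFLVR}, but now working directly at the level of the capacity $\gv$ rather than the full utility function. The key object is the restriction of the preference structure to the subalgebra $\A_u$ generated by mixing $x$ and $y$: for $A\in\A_u$ one has $\mix yAx\in\Act$, and $\gv(A)$ as defined in \eqref{capacity} is the normalized utility of this act. The goal is to produce an equivalent utility $V_*$ whose associated capacity $\gamma_*$ satisfies $\gamma_*(A\cup B)\le\gamma_*(A)+\gamma_*(B)$. Since equivalence of utility functions is a property of the underlying preference order, it suffices to find a positive linear functional on $\Fun[0]{\Act}$ that is strictly monotone along $\mathcal P_{++}$ and whose induced values on the indicators $\delta_{\mix yAx}$ form a subadditive set function.

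First I would set up the relevant cone. Subadditivity of a set function $\gamma$ on $\A_u$ is equivalent to: for every $A,B\in\A_u$, $\gamma(A\cup B)-\gamma(A)-\gamma(B)\le0$. So I would enlarge the cone $\Ku$ from Section \ref{sec arbitrage} by adjoining, for each pair $A,B\in\A_u$, the element $\delta_{\mix y{A\cup B}x}-\delta_{\mix yAx}-\delta_{\mix yBx}+\delta_{\mix y{A\cap B}x}$ — or more simply, since $\mix y{\emptyset}x=x$ and $V(x)=0$ after normalization, the element $\delta_{\mix y{A\cup B}x}-\delta_{\mix yAx}-\delta_{\mix yBx}+\delta_x$ (noting $\mix y{A\cap B}x$ complicates matters, so one keeps it, using $\gamma(A\cap B)\ge 0$). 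Call the resulting convex cone $\mathcal C$: it is $\Ku$ together with the cone generated by these ``subadditivity vectors'' (with the $A\cap B$ term dropped and compensated by positivity, or retained — I would retain it, using $\delta_{\mix y{A\cap B}x}$, and note $\gv(A\cap B)\ge0$ always). The claim then is $\cl{\mathcal C}\cap\mathcal P_{++}=\emptyset$, which by the Yan-type separation argument of Theorem \ref{th NFLVR} yields a strictly positive continuous functional $\phi$ vanishing on $\cl{\mathcal C}$; setting $V_*=\phi$ on $\Act$ gives a utility function equivalent to $V$ (by the $\mathcal P_{++}$ separation) whose capacity $\gamma_*(A)=\phi(\delta_{\mix yAx})$ (after renormalizing $\phi$ on $\IV(y)$) is subadditive because the subadditivity vectors lie in the kernel.

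The main obstacle is verifying $\cl{\mathcal C}\cap\mathcal P_{++}=\emptyset$, i.e.\ that adjoining the subadditivity vectors does not create an arbitrage. The point is that each subadditivity vector, evaluated conditionally at any state $\omega$, is $u$-dominated by zero: indeed for indicator functions $\mathbf 1_{A\cup B}-\mathbf 1_A-\mathbf 1_B+\mathbf 1_{A\cap B}=0$ pointwise, so $u\big((\mix y{A\cup B}x)(\omega)\big)-u\big((\mix yAx)(\omega)\big)-u\big((\mix yBx)(\omega)\big)+u\big((\mix y{A\cap B}x)(\omega)\big)=0$ using $u(y)=1,u(x)=0$; hence these vectors already lie in $\Zu\subset\Ku$. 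So in fact $\mathcal C=\Ku$ up to the cone operations and no new separation is needed — the content is simply that the original $\cl{\Ku}\cap\mathcal P_{++}=\emptyset$ (which holds whenever $V$ is already coherent, or more generally follows from a Yan argument applied to any utility representation), and that the separating functional produced by Theorem \ref{th NFLVR} automatically kills these identically-zero-conditionally vectors, forcing $\gamma_*$ subadditive. Thus the real work is: (i) observe the subadditivity defect vector is in $\Zu$; (ii) invoke Theorem \ref{th NFLVR} (or re-run its Yan-theorem proof, which needs no coherence of $V$, only the arbitrage-free closure condition, itself obtainable from the existence of \emph{some} monotone linear extension) to get the separating $\phi$; (iii) normalize and read off $V_*=\phi|_\Act$ and $\gamma_*=\gv$ computed from $V_*$. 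I would present this as a short corollary-style argument, flagging step (ii) as where one must check the hypotheses of the Yan theorem are met — specifically that the closure of $\Ku$ still misses $\mathcal P_{++}$, which I expect to require either the standing coherence hypothesis in force at that point in the paper or an auxiliary observation that $\Omega$ carries at least one finitely additive probability in $\Prob{\A_u,\Lu}$.
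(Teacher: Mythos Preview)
Your proposal has a genuine gap at step (ii), and you flag it yourself without appreciating how fatal it is. The lemma is stated under \assref{structure} alone, with no coherence hypothesis linking $V$ and $u$; the paper even remarks just before the lemma that less than \assref{structure} is needed, only the mixing of $x$ and $y$. Theorem \ref{th NFLVR}, which you invoke, requires $\cl{\Ku}\cap\mathcal P_{++}=\emp$, and that condition is \emph{equivalent} to the existence of an equivalent representation coherent with $u$. Under bare \assref{structure} nothing rules out acts $f\ge_u g$ with $V(f)<V(g)$, so $\Ku\cap\mathcal P_{++}$ can already be nonempty and no separation is available. The auxiliary fact you hope might rescue the step --- existence of some probability in $\Prob{\A_u,\Lu}$ --- is both easy and irrelevant: the obstruction is compatibility of $V$ with $u$, not the existence of measures on $\Omega$. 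Worse, your own computation shows that the inclusion--exclusion vector satisfies $u(\restr\zeta\omega)=0$ identically, so any $u$-coherent separating functional would annihilate it and force $\gamma_*$ to be \emph{additive}; but that is just Theorem \ref{th GEU} again, and in that regime one could simply take $V_*=V$, making the lemma vacuous. The whole point of Section \ref{sec capacity} is to obtain results ``not involving coherence''.

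The paper's proof proceeds on a different axis: it never touches $u$ or $\Zu$ and works entirely with the order structure that $V$ induces on $\Act$. One forms indicators of order intervals $\IV(g,f)=\IV(f)\setminus\IV(g)$ viewed as functions on $\Act$, and builds the cone of functions
\[
H=\sum_{n=1}^N t_n\big\{\IV(\mix y{B_n}x,\mix y{A_n\cup B_n}x)-\IV(x,\mix y{A_n}x)\big\},
\qquad t_n\ge0,\ \gv(B_n)\ge\gv(A_n),
\]
which encode the target inequalities $\gamma_*(A\cup B)-\gamma_*(B)\le\gamma_*(A)$. The key estimate is purely combinatorial in the $V$-order: evaluating at the act $\mix y{A_1}x$ of smallest capacity gives $\min H+\max H\le0$. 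Yan's theorem, applied to this cone in $\Fun\Act$, then produces $\mu\in ba(\Act)_+$ strictly positive on every nontrivial interval $\IV(g,f)$, and one sets $V_*(f)=\mu\big(\IV(x,f)\big)-\mu\big(\IV(f,x)\big)$ and $\gamma_*(A)=\mu\big(\IV(x,\mix yAx)\big)$. The moral is that the subadditive rescaling is a statement about the order type of $\gv$ alone, so the argument must --- and does --- live entirely on the $V$ side, not the $u$ side.
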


\begin{proof}
For simplicity we adopt the notation 
$\IV(g,f)
=
\IV(f)\setminus \IV(g)$.
Consider a function $H\in\Fun\Act$ of the form
\begin{equation}
\label{H}
H
	=
\sum_{n=1}^Nt_n\big\{\IV(\mix y{B_n}x,\mix y{A_n\cup B_n}x)
-
\IV(x,\mix y{A_n}x)\big\}
\end{equation}
where $t_n\ge 0$, $\gv(B_n)\ge\gv(A_n)$ and, upon 
rearranging terms, $\gv(A_N)\ge\ldots\ge\gv(A_1)>0$. 
Then,
\begin{align*}
\min H+\max H
	\le
H(\mix y{A_1}x)+\max H
	=
-\sum_{n=1}^Nt_n+\max H
	\le
0.
\end{align*}
This inequality applies {\it a fortiori} to all elements of the
convex cone of functions on $\Act$ dominated by some $H$ as in
\eqref{H} as well as to the closure of such cone in the
topology of uniform distance. As in the proof of Lemma
\ref{th NFLVR} we can apply Yan Theorem and obtain
the existence of some $\mu\in ba(\Act)_+$ such that
\begin{equation}
\sup_{\{A,B\in\A_u:A\subset B\}}
\mu\big(\IV(\mix yBx,\mix y{A\cup B}x)-\IV(x,\mix yAx)\big)
	\le
0
	<
\mu(\IV(g,f)),
\qquad
f,g\in\Act,\ 
V(f)>V(g).
\end{equation}
Since $\mu\ne0$ we can normalize $\mu$ so that
$\mu\big(\IV(x,y)\big)=1$. The set function 
$\gamma_*(A)
=
\mu(\IV(x,\mix yAx))$ 
possesses then the following properties, valid for all 
$A,B\subset\Omega$:
\begin{equation}
\label{submeasure}
(a)\ 
\gamma_*(\emp)=0,\quad 
(b)\ 
\gamma_*(A)\le\gamma_*(B) \text{ when }A\subset B
\qand 
(c)\ 
\gamma_*(A\cup B)
	\le
\gamma_*(A)+\gamma_*(B).
\end{equation}
Define
\begin{equation}
V_*(f)
	=
\mu\big(\IV(x,f)\big)-\mu\big(\IV(f,x)\big),
\quad
f\in\Act,
\end{equation} 
so that $V_*(\mix yAx)=\gamma_*(A)$ and
\begin{align*}
V^*(f)-V^*(g)
	=
\mu\big(\IV(g,f)\big)-\mu\big(\IV(f,g)\big),
\qquad
g,f\in\Act.
\end{align*}
Thus $V^*$ is a utility function equivalent to $V$
and such that $V_*(x)=0$ and $V_*(y)=1$.
\end{proof}

A set function satisfying \eqref{submeasure} is often called
a {\it submeasure}. It is possible to reformulate the preceding 
result in a purely measure theoretic language.

\begin{corollary}
\label{cor capacity}
Every capacity $\gamma$ on an algebra $\A$ of subsets 
of $\Omega$ admits a submeasure $\gamma_*$ on $\A$ 
such that
\begin{equation}
\gamma(A)>\gamma(B)
\qiff
\gamma_*(A)>\gamma_*(B),
\qquad
A,B\in\A.
\end{equation}
\end{corollary}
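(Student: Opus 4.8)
The plan is to deduce Corollary~\ref{cor capacity} from Lemma~\ref{lemma submeasure} by constructing, from the abstract capacity $\gamma$, a decision-theoretic instance to which the lemma applies. Concretely, I would set $X=\{0,1\}$, let $\Omega$ and $\A$ be as given, take $\Act$ to consist of all acts of the form $\mix{1}{A}{0}$ with $A\in\A$ (together with the two constants $0$ and $1$, which already appear in this list as $A=\emp$ and $A=\Omega$), set $u=\mathrm{id}$ on $X$, and define $V(\mix{1}{A}{0})=\gamma(A)$. One checks immediately that \assref{structure} holds: $y=1$, $x=0$ satisfy $V(y)=\gamma(\Omega)>\gamma(\emp)=V(x)$ by the definition of a capacity (assuming, as is standard, that a capacity is normalized and monotone, so $\gamma(\Omega)>0=\gamma(\emp)$; if $\gamma(\Omega)=0$ the statement is trivial), the mixtures $\mix{y}{A}{f}$ and $\mix{x}{A}{f}$ for $f=\mix{1}{B}{0}$ stay in $\Act$ because $A\cap B\in\A$, and monotonicity of $\gamma$ gives condition \tiref c. Moreover in this instance the derived algebra $\A_u$ of \eqref{Au} is exactly $\A$, and the associated capacity $\gv$ of \eqref{capacity} is exactly $\gamma$.

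Next I would invoke Lemma~\ref{lemma submeasure}: there is a utility function $V_*$ on this $\Act$, equivalent to $V$, whose associated capacity $\gamma_*(A)=V_*(\mix{y}{A}{x})$ satisfies the submeasure properties \eqref{submeasure}, i.e. $\gamma_*(\emp)=0$, monotonicity, and subadditivity. Since $V_*$ is equivalent to $V$ (they represent the same preference, i.e. induce the same order on $\Act$), for $A,B\in\A$ we have $\gamma(A)>\gamma(B)$ iff $V(\mix{1}{A}{0})>V(\mix{1}{B}{0})$ iff $V_*(\mix{1}{A}{0})>V_*(\mix{1}{B}{0})$ iff $\gamma_*(A)>\gamma_*(B)$, which is precisely the asserted equivalence. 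So the corollary follows in a couple of lines once the translation is set up.

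I do not expect a genuine obstacle here; the only points requiring a little care are bookkeeping ones. First, one must confirm that the map $A\mapsto\mix{1}{A}{0}$ from $\A$ into $\Act$ is a bijection onto the non-constant acts (plus constants), so that "equivalent utility functions give the same strict order on acts" transfers cleanly to "$\gamma_*$ and $\gamma$ induce the same strict order on $\A$"; this holds because distinct $A$ give distinct indicator functions and hence, under the strictly monotone $u=\mathrm{id}$, distinct acts in the $\ge_u$ order. Second, one must make sure the degenerate cases are handled: if $\gamma$ is not normalized one may first rescale, and if $\gamma(\Omega)=0$ then $\gamma\equiv0$ and one takes $\gamma_*=0$. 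The substantive content has already been done in Lemma~\ref{lemma submeasure} via Yan's theorem; the corollary is just the measure-theoretic shadow of that construction, obtained by choosing the simplest possible $(\Omega,X,\Act,u,V)$ realizing the given $\gamma$ as $\gv$.
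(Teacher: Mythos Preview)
Your proposal is correct and follows essentially the same strategy as the paper: construct a decision-theoretic instance realizing $\gamma$ as $\gv$ and then invoke Lemma~\ref{lemma submeasure}. The only difference is cosmetic: the paper takes $\Act$ to be all bounded $\A$-measurable real functions with $V(b)=\int b\,d\gamma$ (the Choquet integral), whereas you take the more minimal $\Act=\{\mix{1}{A}{0}:A\in\A\}$ with $V(\mix{1}{A}{0})=\gamma(A)$, which is in fact closer to the paper's own remark that only mixtures of $x$ and $y$ are needed in Lemma~\ref{lemma submeasure}.
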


\begin{proof}
Take the space of bounded, real valued, $\A$ measurable 
functions as the collection of acts and for each such function
$b$ let $V(b)=\int bd\gamma$ and apply Lemma 
\ref{lemma submeasure}.
\end{proof}

 One
interesting property of $\gv$ is connected with the following 
axiom.

\begin{Ass}
\label{A_partition}
For every set $B\in\A_u$ such that $\gv(B)>0$ there exists 
$N\in\N$ with the property that any $\A_u$ measurable 
partition $\pi$ of $\Omega$ of size $\ge N$ satisfies
\begin{equation}
\gv(B\setminus A)
	>
\gv(A)
\qqtext{for some}
A\in\pi.
\end{equation}
\end{Ass}

\begin{lemma}
\label{lemma KR}
Assume \assref{structure}. Then
\assref{A_partition} is equivalent to the existence of
$P\in\Prob{\A_u}$ such that 
\begin{equation}
\label{KR}
\lim_nP(A_n)=0
\qiff
\lim_n\gv(A_n)=0.
\end{equation} 
\end{lemma}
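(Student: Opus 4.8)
\textbf{Proof plan for Lemma \ref{lemma KR}.}

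The plan is to recognize \eqref{KR} as exactly the statement that $\gv$ is \emph{uniformly absolutely continuous} with respect to a countably additive-like control, and to obtain such a control $P$ from a compactness/exhaustion argument driven by \assref{A_partition}. For the easy direction, suppose $P\in\Prob{\A_u}$ satisfies \eqref{KR}; I would verify \assref{A_partition} by contradiction. If it fails for some $B$ with $\gv(B)>0$, then for every $N$ there is a partition $\pi_N$ of size $\ge N$ with $\gv(B\setminus A)\le\gv(A)$ for all $A\in\pi_N$. Taking $A_N\in\pi_N$ of minimal $P$-measure gives $P(A_N)\le 1/N\to0$, hence $\gv(A_N)\to0$ by \eqref{KR}; but monotonicity of $\gv$ forces $\gv(B)=\gv\big((B\setminus A_N)\cup(B\cap A_N)\big)$ to be controlled by $\gv(A_N)+\gv(B\cap A_N)$ via subadditivity — here I would first pass, using Lemma \ref{lemma submeasure} and Corollary \ref{cor capacity}, to the equivalent submeasure $\gamma_*$, so that subadditivity is available and $\gv(A_n)\to0$ iff $\gamma_*(A_n)\to0$ — and conclude $\gamma_*(B)\le 2\gamma_*(A_N)\to0$, contradicting $\gamma_*(B)>0$.

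For the converse, the substantive direction, I would again replace $\gv$ by the equivalent submeasure $\gamma_*$, which is harmless since \eqref{KR} and \assref{A_partition} are both invariant under the equivalence $\gamma(A)>\gamma(B)\iff\gamma_*(A)>\gamma_*(B)$. The goal becomes: a monotone subadditive $\gamma_*$ on the algebra $\A_u$ satisfying \assref{A_partition} admits a control probability $P$. The key consequence of \assref{A_partition} is a quantitative one: for each $\varepsilon>0$ there is $N(\varepsilon)$ such that every partition $\pi$ of size $\ge N(\varepsilon)$ contains a cell $A$ with $\gamma_*(A)\ge\tfrac12\gamma_*(\Omega\setminus A)\ge\tfrac12(\gamma_*(\Omega)-\gamma_*(A))$, hence $\gamma_*(A)\ge\tfrac13\gamma_*(\Omega)$ — more usefully, applied to $B$ with small $\gamma_*(B)$ it says $\gamma_*$ cannot be "spread thin," which translates into the statement that $\gamma_*$ has no infinite uniformly $\gamma_*$-bounded-below disjoint family; equivalently $\gamma_*$ is what is sometimes called an \emph{exhaustive} (strongly bounded) submeasure. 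I would then invoke the standard Maharam-type argument: an exhaustive submeasure on an algebra is uniformly absolutely continuous with respect to a finitely additive probability, constructed by exhausting, for each $n$, a maximal finite disjoint family of sets of $\gamma_*$-measure $\ge 1/n$, assigning them decreasing weights, and summing; \assref{A_partition} is precisely what makes each such maximal family finite with a bound depending only on $n$. This yields $P\in\Prob{\A_u}$ with $\sup\{\gamma_*(A):P(A)\le\delta\}\to0$ as $\delta\to0$, which gives the forward implication in \eqref{KR}; the reverse implication $\gamma_*(A_n)\to0\Rightarrow P(A_n)\to0$ comes for free by choosing $P$ absolutely continuous with respect to $\gamma_*$ in the same construction (weight sets by their order of appearance in the exhaustion, which respects $\gamma_*$).

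The main obstacle I anticipate is extracting genuine \emph{uniformity} from \assref{A_partition}: the axiom as stated quantifies over partitions of a fixed large size $N$ depending on $B$, and I must convert this into a bound on the cardinality of disjoint families with $\gamma_*$ bounded below by a fixed constant, independent of which sets appear. The bridge is that any finite disjoint family $A_1,\dots,A_k$ with each $\gamma_*(A_i)>t$ can be completed to a partition by adjoining the complement $A_0=\Omega\setminus\bigcup_i A_i$; if $\gamma_*(B\setminus A_i)\le\gamma_*(A_i)$ is to \emph{fail} for every cell as \assref{A_partition} demands for $B=\Omega$, one gets a contradiction once $k+1\ge N$, forcing $k< N$. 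Pinning down exactly which $B$ to apply the axiom to — and handling the case $\gamma_*(\Omega)=0$, i.e. $V$ constant, separately as trivial — is the delicate bookkeeping, but once the finite-disjoint-family bound is in hand the control measure is routine.
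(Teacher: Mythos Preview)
Your skeleton matches the paper's: pass to the equivalent submeasure $\gamma_*$ of Lemma~\ref{lemma submeasure}, handle the implication \eqref{KR}$\Rightarrow$\assref{A_partition} by selecting a cell of small $P$-mass from a large partition and using subadditivity of $\gamma_*$ (the paper argues directly rather than by contradiction, but the content is identical), and for \assref{A_partition}$\Rightarrow$\eqref{KR} first show that $\gamma_*$ is uniformly exhaustive.

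The gap is in what you do after that. Having established uniform exhaustivity, the paper simply invokes \cite[Theorem~3.4]{kalton_roberts}: a uniformly exhaustive submeasure is equivalent to a finitely additive measure. That theorem is substantial, not a routine exhaustion. Your sketch --- ``for each $n$, take a maximal finite disjoint family with $\gamma_*\ge 1/n$, assign decreasing weights, and sum'' --- does not produce a finitely additive $P$ on $\A_u$: a countable list of finite disjoint families carries no canonical additive extension to the whole algebra, and nothing in the construction forces $P(A_n)\to0\Rightarrow\gamma_*(A_n)\to0$. Your general claim that ``an exhaustive submeasure on an algebra is uniformly absolutely continuous with respect to a finitely additive probability'' is false --- this is Maharam's problem, settled negatively by Talagrand --- so the uniformity you do extract from \assref{A_partition} is essential, and even granted it, the passage to a control measure is precisely the Kalton--Roberts argument, not something one rederives in a paragraph. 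A minor point: your displayed inequality $\gamma_*(A)\ge\tfrac12\gamma_*(\Omega\setminus A)$ has the wrong sign; \assref{A_partition} yields a cell with $\gamma_*(A)<\gamma_*(B\setminus A)\le\gamma_*(B)$, i.e.\ a \emph{small} cell, which is what uniform exhaustivity requires.
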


\begin{proof}
It is obvious that $\gv$ satisfies either \assref{A_partition} or 
\eqref{KR} if and only if so does the submeasure $\gamma_*$ 
defined in Lemma \ref{lemma submeasure}. Fix then $B$ such 
that $\gamma_*(B)>0$ and assume \assref{A_partition}. 
Then, there exists $N\in\N$ such that for every disjoint 
collection $A_1,\ldots,A_N$ of subsets of $\Omega$ 
we can find an index $1\le i\le N$ such that 
$\gamma_*(B\setminus A_i)
	>
\gamma_*(A_i)$
and thus 
$\gamma_*(A_i)
	<
\gamma_*(B\setminus A_i)
	\le
\gamma_*(B)$.
This inequality, being true for all $B$ with $\gamma_*(B)>0$,
implies that for each $\varepsilon>0$ any partition
of sufficiently large size contains an element $A$ 
such that $\gamma_*(A)<\varepsilon$. In other words, 
$\gamma_*$ is uniformly exhaustive and it thus satisfies 
\eqref{KR} for some $P\in\Prob\Omega$, by 
\cite[Theorem 3.4]{kalton_roberts}.

Conversely, assume \eqref{KR} and fix $B$ such that
$\gv(B)>0$. Choose 
$0
	<
3\varepsilon
	<
\gv(B)$ 
and let $P(A)$ is sufficiently small so that
$\gv(A)
	\le
\varepsilon$
and
$\gv(B)
	\le 
\gv(B\setminus A)+\varepsilon$, 
which we may do by choosing appropriately $A$ from any
partition of sufficiently large size. Then,
$
\gv(A)
	<
\gv(A)+\varepsilon
	<
\gv(A)+\gv(B)-2\varepsilon
	\le
\gv(B)-\varepsilon
	\le
\gv(B\setminus A)$.
\end{proof}

It is natural to compare \assref{A_partition} with the 
apparently similar axiom ({\bf P6}) of Savage. In plain 
terms, \assref{A_partition} asserts that any large partition 
of $\Omega$ contains {\it at least one} ``small'' (possibly 
null) set. Savage ({\bf P6}) requires instead the existence 
of a partition (necessarily of large size) {\it all} of whose 
elements are equally ``small'' (and thus with positive mass). 
Despite its relative weakness, assumption \assref{A_partition} 
is sufficient to establish a relation between utility from 
acts with probability. Although such probability may not 
be unique it need not be atomless either, thus avoiding 
an artificial implication of Savage construction.

\section{Relation with the literature}
\label{sec literature}

Introducing a convex structure on the set of acts, e.g.  
via the embedding of acts into gambles, is, of course, 
not a new idea. Most of the papers which adopt it, 
however, end up with justifications which are mainly 
of an objectivist nature, such as the roulette lotteries
of Anscombe and Aumann \cite{anscombe_aumann}. 
This raises the criticism that introducing objective elements into a purely subjective 
framework is somewhat contradictory. Scott \cite{scott} 
presents a totally abstract (but extremely appealing) 
framework to choice based on formal sums which can 
be applied to several problems involving the numerical 
representation of binary relations over a finite set, such 
as non transitive preferences. His paper is perhaps the 
first attempt to obtain a choice theoretic model from 
simple separating arguments and was later followed 
by Fishburn \cite{fishburn_75} who considered the 
problem of choice among probability distributions. The 
best attempt to obtain a linear structure with purely 
behavioural assumptions is the paper by Ghirardato 
et al. \cite{ghirardato_et_al} in which the concept of 
{\it utility mixtures} is introduced based on the assumption 
that the image of $X$ under $u$ is convex. But even in
this setting one needs to assume that acts include all 
simple functions.

In relatively recent times a number of papers have
looked at decision problems from a financial point
of view and are thus indirectly related to our
interpretation of utility as the value of a gamble. 
Echenique et al. \cite{echenique_et_al} propose
to test revealed preference theory by measuring the 
profits that could be made in case of violations of 
its axioms. Echenique and Saito \cite{echenique_saito} 
implement a similar approach for expected utility
under risk aversion. The paper that comes closer to 
the present one is, however, the recent work of Gilboa 
and Samuelson \cite{gilboa_samuelson} in which,
working in a finite setting, it is investigated whether 
the utility functions $u$ and $V$ satisfy the integral 
representation (to which they refer to as coherence) 
or the maxmin representation. They interpret this as 
a problem of an agent trying to justify his choices, based 
on $V$, to some principal who evaluates the results 
through $u$. They assume that acts include all functions
$\Omega\to X$ and do not provide a behavioural interpretation
of the results obtained which are mainly geometric in
nature. However, it is the first paper that recognizes 
explicitly the connection with finance.

\BIB{abbrv}

\end{document}